\documentclass[acmtog,nonacm]{acmart}

\usepackage{booktabs} 

\setcitestyle{nosort,square} 

\usepackage{picinpar,moresize,xfrac,graphpap,dcolumn,wrapfig,graphicx}

\microtypesetup{stretch=30,shrink=30}
\DeclareGraphicsExtensions{.png,.pdf,.jpg}
\usepackage{subcaption}
\usepackage{stackengine}

\usepackage{tikz,pgfplots}
\usepackage{tikz-cd}
\usepackage{pgf}
\usepgfplotslibrary{colormaps}
\usetikzlibrary{pgfplots.colormaps}
\usetikzlibrary{arrows,automata,positioning,backgrounds}
\usepackage{rotating}
\pgfplotsset{compat=newest}
\pgfplotsset{plot coordinates/math parser=false}
\newlength\figureheight
\newlength\figurewidth

\usepackage{soul,array,calc,url,ragged2e,graphpap}
\usepackage{booktabs} 
\usepackage{tabularx}
\usepackage{colortbl}
\usepackage{multirow}
\usepackage{hyphenat}
\usepackage{transparent}
\usepackage{enumerate}
\usepackage{mathrsfs}
\usepackage{mdframed}
\usepackage{acro}

\usepackage[figure]{hypcap}
\usepackage{mathtools}
\mathtoolsset{centercolon}
\usepackage{nicefrac}
\usepackage{units}
\usepackage[normalem]{ulem}
\usepackage{cancel}
\usepackage{pbox}
\usepackage{multicol}
\usepackage{cases}
\usepackage[all]{xy}
\usepackage{nicematrix}

\usepackage{adjustbox}
\usepackage{wrapfig}
\AfterEndEnvironment{wrapfigure}{\setlength{\intextsep}{0mm}}

\usepackage{cleveref}
\crefmultiformat{subequation}%
{\edef\crefstripprefixinfo{#1}(#2#1#3}%
{,#2\crefstripprefix{\crefstripprefixinfo}{#1}#3)}%
{,#2\crefstripprefix{\crefstripprefixinfo}{#1}#3}%
{,#2\crefstripprefix{\crefstripprefixinfo}{#1}#3)}

\newcommand{\overbar}[1]{\mkern 1.5mu\overline{\mkern-1.5mu#1\mkern-1.5mu}\mkern 1.5mu}
\newcommand*{\conj}[1]{\overbar{#1}}

\makeatletter
\DeclareFontFamily{OMX}{MnSymbolE}{}
\DeclareSymbolFont{MnLargeSymbols}{OMX}{MnSymbolE}{m}{n}
\SetSymbolFont{MnLargeSymbols}{bold}{OMX}{MnSymbolE}{b}{n}
\DeclareFontShape{OMX}{MnSymbolE}{m}{n}{
    <-6>  MnSymbolE5
   <6-7>  MnSymbolE6
   <7-8>  MnSymbolE7
   <8-9>  MnSymbolE8
   <9-10> MnSymbolE9
  <10-12> MnSymbolE10
  <12->   MnSymbolE12
}{}
\DeclareFontShape{OMX}{MnSymbolE}{b}{n}{
    <-6>  MnSymbolE-Bold5
   <6-7>  MnSymbolE-Bold6
   <7-8>  MnSymbolE-Bold7
   <8-9>  MnSymbolE-Bold8
   <9-10> MnSymbolE-Bold9
  <10-12> MnSymbolE-Bold10
  <12->   MnSymbolE-Bold12
}{}
\let\llangle\@undefined
\let\rrangle\@undefined
\DeclareMathDelimiter{\llangle}{\mathopen}%
                     {MnLargeSymbols}{'164}{MnLargeSymbols}{'164}
\DeclareMathDelimiter{\rrangle}{\mathclose}%
                     {MnLargeSymbols}{'171}{MnLargeSymbols}{'171}
\makeatother

\usepackage{fancyvrb,listings}
\usepackage{algorithm}
\usepackage{algorithmicx}
\usepackage[noend]{algpseudocode}

\algrenewcommand\alglinenumber[1]{\footnotesize #1:}
\makeatletter  
 \renewcommand{\ALG@name}{\small Algorithm} 
\makeatother 
\algdef{SE}[DOWHILE]{Do}{doWhile}{\algorithmicdo}[1]{\algorithmicwhile\ #1}%

\newtheoremstyle{mine}{3pt}{3pt}{\itshape}{}{\bfseries}{.}{.5em}{}
\theoremstyle{mine}
\newtheorem{theorem}{Theorem}[section]
\newtheorem{lemma}{Lemma}[section]
\newtheorem{proposition}{Proposition}[section]
\newtheorem{corollary}{Corollary}[section]
\newtheorem{definition}{Definition}[section]

\newtheorem{problem}{Problem}[section]

\def\ie{\emph{i.e.}}
\def\eg{\emph{e.g.}}

\def\CC{\mathbb{C}}
\def\DD{\mathbb{D}}

\def\HH{\mathbb{H}}

\def\NN{\mathbb{N}}

\def\RR{\mathbb{R}}
\def\SS{\mathbb{S}}

\def\ZZ{\mathbb{Z}}

\def\bB{\mathbf{B}}

\def\bN{\mathbf{N}}

\def\bQ{\mathbf{Q}}

\def\bT{\mathbf{T}}

\def\cL{\mathcal{L}}

\def\bc{\mathbf{c}}

\def\bn{\mathbf{n}}

\def\br{\mathbf{r}}

\def\bv{\mathbf{v}}
\def\bw{\mathbf{w}}
\def\bx{\mathbf{x}}
\def\by{\mathbf{y}}

\def\brho{\boldsymbol{\rho}}

\def\bphi{\boldsymbol{\phi}}

\def\bpsi{\boldsymbol{\psi}}

\def\bzero{\mathbf{0}}

\usepackage{bbm}
\DeclareSymbolFont{bbold}{U}{bbold}{m}{n}
\DeclareSymbolFontAlphabet{\mathbbold}{bbold}
\newcommand{\ii}{\mkern1.5mu\mathbbold{i}\mkern1.5mu}

\renewcommand{\Im}{\operatorname{Im}}
\renewcommand{\Re}{\operatorname{Re}}

\DeclareMathOperator{\SO}{SO}

\DeclareMathOperator{\SU}{SU}

\DeclareMathOperator{\Homeo}{Homeo}

\DeclarePairedDelimiterX\braket[2]{\langle}{\rangle}{#1\,\delimsize\vert\,\mathopen{}#2}

\def\bpsi{\boldsymbol{\psi}}

\definecolor{b1}{rgb}{0.158099,0.313781,0.636957}
\definecolor{b2}{rgb}{0.525367,0.691857,0.998936}
\definecolor{g1}{rgb}{0.256000,0.640000,0.576000}
\definecolor{g2}{rgb}{0.559573,0.800781,0.760580}
\definecolor{p1}{rgb}{0.416000,0.192000,0.640000}
\definecolor{p2}{rgb}{0.680880,0.561880,0.799881}
\definecolor{r1}{rgb}{0.800000,0.200000,0.000000}
\definecolor{r2}{rgb}{1.000000,0.702595,0.603461}
\definecolor{k1}{gray}{0}
\definecolor{k2}{gray}{0.7}

\usepackage{enumitem}
\setcopyright{rightsretained}
\acmJournal{TOG}
\acmYear{2026} \acmVolume{0} \acmNumber{0} \acmArticle{0} \acmMonth{0}\acmDOI{0}
\makeatletter
\let\@authorsaddresses\@empty
\makeatother
\begin{document}
\title{Neural Representation of Minimal Surfaces}

\author{Jiayin Sun}
 \affiliation{
   \institution{University of Utah}
   \streetaddress{201 Presidents Circle}
   \city{Salt Lake City}
   \state{UT}
   \postcode{84112}
   \country{USA}
 }
 \email{alchern@ucsd.edu}
\author{Albert Chern}
 \affiliation{
   \institution{University of California San Diego}
   \streetaddress{9500 Gilman Dr, MC 0404}
   \city{La Jolla}
   \state{CA}
   \postcode{92093}
   \country{USA}
 }
 \email{alchern@ucsd.edu}

\newcommand{\AC}[1]{{\footnotesize\color[rgb]{0.2,0.4,0.8}\textbf{AC:} #1}}
\newcommand{\JS}[1]{{\footnotesize\color[rgb]{0.2,0.8,0.4}\textbf{JS} #1}}

\begin{teaserfigure}
    \centering
    \setlength{\unitlength}{1pt}
    \begin{picture}(500,220)
        \put(0,0){
        \includegraphics[width=\linewidth]{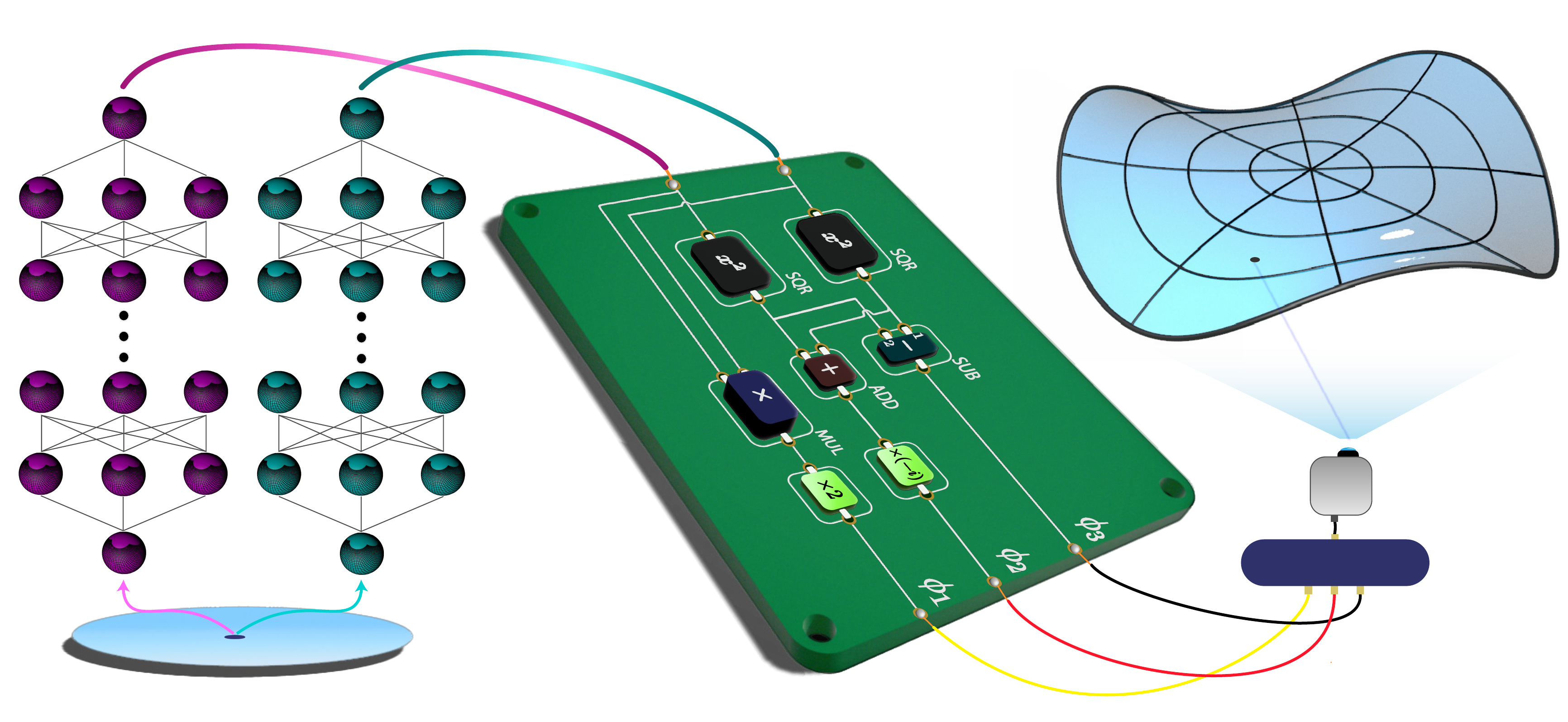}
            }
        \put(50,205){$q$}
        \put(130,205){$p$}
        \put(80,22){$z$}
        \put(400,155){$\br(z)$}
        \put(407,49){ \textcolor{white}{${ \operatorname{Re} \left\{ \int \cdot  \;dz \right\} + c}$}}
    \end{picture}
    \caption{Overview of the Neural Architecture. The holomorphic functions $p$ and $q$ are constructed as complex-valued neural network (CVNN) whose inputs, weights and activations take values in $\mathbb{C}$ (illustrated compactly as Riemann spheres). The output of $p$ and $q$ are then encoded into the holomorphic triple $(\phi_1, \phi_2, \phi_3)$ by Euclid's spinor representation \eqref{eq:Pythagorean}, and the integration of such triple intrinsically yields a minimal surface in $\mathbb{R}^3$.
    }
    \label{fig:NN_diagram}
\end{teaserfigure}

\begin{abstract}

We propose a neural representation for minimal surfaces. Unlike prior approaches based on discretization or Physics-Informed Neural Networks (PINNs), where meshes or neural fields are optimized to approximate the governing equations, our method builds on an exact representation, similar to the classical Weierstrass--Enneper parameterization,
yielding minimal surfaces up to negligible quadrature error in evaluation.
We formulate a training objective for the Plateau problem that optimizes over this representation.
\end{abstract}

\begin{CCSXML}
<ccs2012>
   <concept>
       <concept_id>10010147.10010371.10010396.10010399</concept_id>
       <concept_desc>Computing methodologies~Parametric curve and surface models</concept_desc>
       <concept_significance>500</concept_significance>
       </concept>
   <concept>
       <concept_id>10010147.10010371.10010396.10010402</concept_id>
       <concept_desc>Computing methodologies~Shape analysis</concept_desc>
       <concept_significance>300</concept_significance>
       </concept>
   <concept>
       <concept_id>10010147.10010257.10010293.10010294</concept_id>
       <concept_desc>Computing methodologies~Neural networks</concept_desc>
       <concept_significance>300</concept_significance>
       </concept>
 </ccs2012>
\end{CCSXML}

\ccsdesc[500]{Computing methodologies~Parametric curve and surface models}
\ccsdesc[300]{Computing methodologies~Shape analysis}
\ccsdesc[300]{Computing methodologies~Neural networks}

\keywords{Minimal surfaces}

\maketitle



\section{Introduction}
A \emph{minimal surface} is a surface with vanishing mean curvature everywhere.  In physical terms, these surfaces have no net surface tension at each point, and thus they describe the shapes of soap films in equilibrium.  
The construction of minimal surfaces is one of the central interests in geometry processing \cite{Wang:2021:CMS}, discrete differential geometry \cite{pinkall1993computing,bobenko2006minimal}, architectural geometry \cite{tenu2010minimal,velimirovic2008minimal}, material science \cite{al2019multifunctional,reynolds2023characterisation}, and mathematical visualization \cite{melko2010visualizing,palais1999visualization}. 
The classic problem for minimal surfaces is the \emph{Plateau problem}: given a closed space curve, find a minimal surface bounded by it. 

After general mathematical existence results were established for the Plateau problem, many methods were developed for solving it numerically.
A common method is by discretizing the surface into a mesh and solve for the discrete minimal surface subject to the given boundary condition \cite{pinkall1993computing}. A few recent methods taking the geometric measure theoretic route solve for the \emph{currents} representing the minimal surface over the ambient 3-space \cite{parks1997computing,Wang:2021:CMS}.
A caveat of these numerical methods are that they do not generate exact minimal surfaces in the presence of discretization.
One may avoid discretization by employing a \emph{neural field} representation for the unknown embedding function or currents \cite{Palmer_2022_CVPR,berzins2024geometry}.  However, in this ``physics-informed neural network'' strategy, the neural fields are being optimized over finite sample points effectively giving rise to an implicit discretization resolution.  Afterall, minimal surfaces are solutions to a nonlinear PDE, for which general computable representation are not exact.



It turns out that there is a way to represent minimal surface \emph{exactly}. It is known as the \emph{Weierstrass--Enneper parameterization} \cite{weierstrass1866untersuchungen,enneper1868analytisch,osserman2013survey}. Given any holomorphic function $f(\zeta)$ and a meromorphic function $g(\zeta)$ defined on a domain \(\Omega\subset\CC\) of the complex plane, the surface \(\br\colon\Omega\to \RR^3\) defined by
\begin{equation}
\label{eq:WE}
\textstyle
\br(z)
\coloneqq
\operatorname{Re}
\int_{0}^{z}
\left(
\frac{1}{2}f(1-g^2),
\frac{\ii}{2}f(1+g^2),
fg
\right)
\, d\zeta
\end{equation}
is \emph{guaranteed} to be a minimal surface.
In the computational context, it is natural to ``neuralize'' this Weierstrass--Enneper parameterization, since it is straightforward to construct \emph{Complex-Valued Neural Networks} (CVNNs) \cite{lee2022complex,calafa2024physics} that represent functions that are exactly holomorphic.
There are also Universal Approximation Theorems for nueral representations of holomorphic functions \cite{calafa2024physics}.
Eq.~\eqref{eq:WE} can be interpreted as an ``architecture'' for combining neural networks \(f\) and \(g\) to represent a surface.  This architecture is designed with such a strong prior for the minimal surfaces problem that any surface \(\br\) represented by it is an exact minimal surface.

While the Weierstrass--Enneper formula appears powerful, there is an unavoidable problem if we directly replace \(f,g\) by neural networks.
The meromorphic function $g$ has a geometric meaning: \(g\) is the \emph{Gauss map} (normal) of the surface, after a post-composition with the stereographic projection from \(\CC\cup\{\infty\}\) to the unit sphere.  
In particular, a pole inevitably appears in \(g\) whenever the surface normal points to the ``up'' direction of the given \(\RR^3\) coordinate system.  These pole singularities in meromorphic functions are difficult to represent by complex neural network, unlike holomorphic functions with the universal approximation property.
Another subtle problem of \eqref{eq:WE} is that the formula can represent surfaces with \emph{branch points} that are undesirable in the solutions of the Plateau problem.

To address these problems, we study the fundamental characterization for a surface to be minimal, and propose an alternative formula of \eqref{eq:WE}. 
This formula is the lesser known \emph{spinor representation of minimal surfaces} \cite{kusner1995spinor}. It only requires \emph{two holomorphic functions}, rather than having one of them being meromorphic.  
We show that any solution to the Plateau problem for rectifiable Jordan curves can be parameterized by the spinor formula.  We also show that every spinor represented minimal surface is either regular or has only removable branch points.  
Using this formula, we represent minimal surfaces with complex-valued neural networks (\Cref{fig:NN_diagram}).  We design training tasks that find the two holomorphic functions that solve the Plateau problem for a given curve.



\subsection{Related Work}

 Alongside the classical analytic formulations of the Plateau problem  \cite{lagrange1761essai,douglas1931solution,osserman1970proof}, computational approaches for constructing minimal surfaces have also been extensively studied over a century. 
Douglas~\shortcite{douglas1927method} discretized the surface as a height function over a grid. 
This discrete approximation was approached with root finding algorithms  \cite{concus1967numerical} and finite element methods \cite{hinata1974numerical}. 
Beyond height-field representations, more general minimal surfaces were obtained by representing the surface with triangle meshes and minimizing either the Dirichlet energy \cite{wilson1961discrete} or the area functional \cite{wagner1977contribution}.
Modern methods include mean curvature flow \cite{dziuk1990algorithm,brakke1992surface}, $H^1$-regularized curvature flow \cite{pinkall1993computing,schumacher2019variational} and conformal Willmore flow \cite{crane2011spin}.
Minimal surfaces also arise as \emph{minimal currents} in the geometric measure theoretic formulation \cite{federer1960normal,almgren2006co,brezis2019plateau}.
This formulation casts the Plateau problem into an \(L^1\) minimization for a 1-form in the ambient 3D space \cite{parks1997computing,Wang:2021:CMS,Palmer_2022_CVPR}.  Related discrete analogues, known as the \emph{optimal homologous chain problem}, was studied in \cite{sullivan1990crystalline,dunfield2011least,dey2010optimal,cohen2022lexicographic}.

Recent advances in neural networks have enabled continuous surface representations that avoid fixed-resolution discretizations.
Implicit neural representations model surfaces using neural fields that approximate a signed distance field \cite{park2019deepsdf,schirmer2024geometric}, an occupancy field \cite{mescheder2019occupancy}, or a current \cite{Palmer_2022_CVPR}.  
Neural networks have also been used to represent explicit surface parameterizations \cite{morreale2021neural,groueix2018papier,sivaram2024neural}.
Recent works have further explored the use of neural networks for constructing minimal surfaces. 
\cite{zhou2023approximating,kabasi2023physics} represent surfaces as height fields and employ Physics-Informed Neural Networks (PINNs) to solve the Euler-Lagrange equation of the area functional under prescribed boundary conditions. Similarly, Hashimoto et al.~\shortcite{hashimoto2026physics} use a PINN to solve the Euler-Lagrange equations for constructing minimal surfaces in curved spacetimes.
Berzins et al.~\shortcite{berzins2024geometry} introduced Geometry-Informed Neural Networks (GINNs), in which implicit surface representations are optimized with geometric functionals and boundary conditions to solve the Plateau problem and other geometric optimization problems.

In contrast to these approaches, where minimality is achieved by optimizing the network to satisfy the governing PDEs, we adopt a surface representation that intrinsically encodes the minimality condition.  Our approach is conceptually similar to that of \cite{calafa2024physics} for two-dimensional linear elasticity, where the general solution admits a holomorphic representation that is subsequently neuralized.

\section{Minimal Surface Representations}


In this section, we introduce our neural representation for minimal surfaces. 
The design is based on a modification of the Weierstrass--Enneper parameterization. 
We begin by reviewing the classical theory of holomorphic representations of minimal surfaces, and then present our representation along with a neural architecture built upon it.

Throughout this paper, we consider surfaces with disk topology. Let \(\DD\subset\RR^2\cong\CC\) denote the (open) unit disk.  
A \emph{surface} is any smooth map \(\br\colon \DD\to\RR^3\).
A surface is \emph{regular} (or an \emph{immersion}) if \(d\br|_p\colon T_p\DD\to\RR^3\) has rank 2 for all \(p\in \DD\).

A surface is said to be \emph{conformal} if \(|{\partial \br\over\partial x}| = |{\partial \br\over\partial y}|\) and \(\langle{\partial \br\over\partial x},{\partial\br\over\partial y}\rangle = 0\). 
A surface is \emph{harmonic} if \(\Delta\br = {\partial^2\br\over\partial x^2}+ {\partial^2\br\over\partial y^2}=\bzero\).%
\footnote{
Note that the notions of conformality and harmonicity depend only on a \emph{conformal structure}, rather than on a specific coordinate system. For an oriented two-dimensional manifold \(M\) endowed with a conformal structure represented by a Hodge star operator \(\star_1\colon T^*M\to T^*M\) acting on 1-forms, a surface \(\br\colon M\to\RR^3\) is conformal if \(|d\br(X)| = |(\star_1 d\br)(X)|\) and \(\langle d\br(X),(\star_1 d\br)(X)\rangle = 0\) for all \(X\in TM\).  A surface is harmonic if \(d\star_1 d\br = \bzero\).
}

\subsection{Minimal Surfaces}

\begin{definition}[Generalized minimal surface]\label{def:GeneralizedMinimalSurface}
    A surface \(\br\colon\DD\to\RR^3\) is a \textbf{generalized minimal surface} if it is both conformal and harmonic.
    A surface is a \textbf{regular minimal surface} if it is regular and a generalized minimal surface.
\end{definition}

    


This definition of minimal surface is equivalent to the standard characterization of vanishing mean curvature.

\begin{proposition}[\cite{osserman2013survey}]
    If $\br$ is a regular minimal surface, then $\br$ has mean curvature $H=0$ everywhere.
\end{proposition}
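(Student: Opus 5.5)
The plan is to compute the mean curvature directly in isothermal coordinates, using the two defining properties of \Cref{def:GeneralizedMinimalSurface}. Write \(\br_x = \partial\br/\partial x\) and \(\br_y=\partial\br/\partial y\). By conformality there is a function \(\lambda\colon\DD\to\RR_{\ge 0}\) with
\[
|\br_x|^2=|\br_y|^2=\lambda^2,\qquad \langle\br_x,\br_y\rangle=0 .
\]
Regularity forces \(\lambda>0\) everywhere, since \(\br_x,\br_y\) are orthogonal and have equal length, and rank \(2\) rules out their vanishing. Hence the unit normal \(\bn=\br_x\times\br_y/\lambda^2\) is well defined and smooth. The first fundamental form is \(E=G=\lambda^2\), \(F=0\). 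With second fundamental form coefficients \(L=\langle\br_{xx},\bn\rangle\), \(M=\langle\br_{xy},\bn\rangle\), \(N=\langle\br_{yy},\bn\rangle\), the standard formula gives
\[
H=\frac{EN-2FM+GL}{2(EG-F^2)}=\frac{L+N}{2\lambda^2}=\frac{\langle\Delta\br,\bn\rangle}{2\lambda^2}.
\]

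Harmonicity gives \(\Delta\br=\bzero\), so \(H=0\) at every point, which proves the claim.

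An alternative route avoids the second fundamental form. Differentiate the conformality relations to obtain \(\langle\Delta\br,\br_x\rangle=0=\langle\Delta\br,\br_y\rangle\), which shows that \(\Delta\br\) is always normal. Then recall the identity \(\Delta\br=2\lambda^2H\bn\) for conformal immersions. I would present the direct computation above as the main proof, and mention this identity only as a remark.

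There is no real obstacle here. The only point needing care is that \(\lambda\neq0\), since this is what makes \(\bn\) and the division by \(\lambda^2\) legitimate. It is exactly where the hypothesis ``regular'' is used, as opposed to the generalized minimal surfaces that may have branch points. The formula for \(H\) in terms of \(E,F,G,L,M,N\) is classical and will be cited from \cite{osserman2013survey}.
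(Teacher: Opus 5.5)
Your proof is correct. Conformality together with regularity gives $E=G=\lambda^2>0$ and $F=0$, so $H=\langle\Delta\br,\bn\rangle/(2\lambda^2)$, and this vanishes by harmonicity. The paper gives no proof of its own and only cites \cite{osserman2013survey}, where the argument is this same isothermal-coordinate computation: the identity $\Delta\br=2\lambda^2 H\bn$, derived exactly as in your alternative route.
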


In general, a generalized minimal surface can feature isolated non-immersed \emph{branch points}, while the rest of the surface is regular with vanishing mean curvature.

\subsubsection{Plateau's Problem}

A class fundamental questions concerning constructing minimal surfaces is the
\emph{Plateau problem}.  They take the general form of ``Extend a given boundary space curve into a minimal surface.'' 

\begin{problem}[Plateau problem] \label{prob:plt_prob}
Given a rectifiable Jordan curve $\Gamma$ in $\mathbb{R}^3$, construct a map $\br: \overline{\DD} \to \RR^3$ so that $\br|_{\partial\DD}$ is an orientation-preserving homeomorphism from \(\partial\DD\) to \(\Gamma\), and $\br|_{\DD}$ is a minimal surface.
\end{problem}
 
\cite{douglas1931solution} and \cite{osserman1970proof} show that there always exists a regular minimal surface solving \Cref{prob:plt_prob}.





\subsubsection{Null Holomorphic Curves}
\label{sec:NullHolomorphicCurves}
The characterization of minimal surfaces in \Cref{def:GeneralizedMinimalSurface} in terms of harmonicity and conformality makes complex analysis particularly natural to employ.  
Holomorphic functions, viewed as maps between complex planes, are conformal.
The real part of any holomorphic function is harmonic;  conversely, every harmonic function is the real part of a holomorphic function.

Identify the \((x,y)\) coordinates of \(\DD\) with a single complex variable \(z = x+\ii y\). Each harmonic surface \(\br\colon\DD\to\RR^3\) can then be identified as the real part of a triple of holomorphic functions \(\brho = (\rho_1,\rho_2,\rho_3)\colon\DD\to\CC^3\), that is, \(\br(z) = \Re(\brho(z))\).

Since \(\brho\) is a holomorphic map from one complex variable to three complex coordinates, it is also called a \emph{holomorphic curve}.

For each holomorphic curve \(\brho\colon\DD\to\CC^3\), consider its complex derivative \(\bphi\coloneqq {\partial\over\partial z}\brho\colon\DD\to\CC^3\), which is also holomorphic.
Analogous to classical curve theory, \(\bphi\) represents the \emph{complex velocity vector} of the curve \(\brho\).
One can recover the holomorphic curve, and hence the harmonic surface \(\br\), by integrating the velocity \(\bphi\):
\begin{align}\label{eq:HarmonicSurfaceFromNullVelocity}
\textstyle
    r_k(z) = \Re\left(\int_0^z \phi_k(\zeta)\, d\zeta\right)+c_k,\quad z\in\DD,\quad k=1,2,3,
\end{align}
where \(c_k\in\RR\) are constants.

One can verify that the harmonic surface \(\br\) is regular at \(z\in\DD\) if and only if \(\bphi(z)\neq 0\) (\ie\@ \(\sum_{k=1}^3|\phi_k(z)|^2 \neq 0\)).
A less obvious, but still easily verified (Appendix~\ref{appendix:facts_about_harmonic_map}), fact is that \(\br\) is conformal if and only if 
\begin{align}\label{eq:NullCondition}
    \phi_1^2(z)+\phi_2^2(z) + \phi_3^2(z) = 0\quad\text{for all \(z\in\DD\).}
\end{align}
A velocity vector \(\bphi(z)\) is said to be \emph{null} if it satisfies \eqref{eq:NullCondition}.  A holomorphic curve is called a \emph{null holomorphic curve} if its velocity function \(\bphi\) is null everywhere.

In summary, every generalized minimal surface is the real part of a null holomorphic curve.  The surface is, in addition, regular if the velocity of the null curve is nonvanishing.

\subsubsection{Branch Points}

While representing minimal surfaces by null holomorphic curves is straightforward, the integrated surface \(\br\) may exhibit non-immersed branch points. Here, we briefly digress to discuss branched surfaces.

Isolated non-immersed points on a surface can be classified as either \emph{genuine branch points} or \emph{removable} ones \cite[\S~2.2]{chern2018shape}. Genuine branch points are also referred to as \emph{pinch points} in \cite{chern2018shape}.  
In the following, we provide a concrete characterization based on the theory in \cite{chern2018shape} and the related classification results of \cite{kauffman1977immersions,pinkall1984regular,hass1985immersions}.

Let \(\br\colon\DD\to\RR^3\) be a surface with isolated branch points, and let \(\DD' = \{p\in\DD \mid \br \text{ is regular at } p\}\) denote the punctured domain on which \(\br\) is an immersion.  
Using the immersion \(\br|_{\DD'}\), each closed regular curve \(\gamma\colon\SS^1\to\DD'\) can be canonically assigned a frame \(\bQ_\gamma = [\bT,\bN,\bB]\colon\SS^1\to\SO(3)\), defined as the tangent vector \(\bT \coloneqq \textsc{Normalize}(d\br \circ {d\gamma\over dt})\), the surface normal \(\bN \coloneqq \bn \circ \gamma\) (where \(\bn\colon\DD'\to\SS^2\) is the unit normal of the immersion \(\br\)), and \(\bB = \bT \times \bN\).  
In particular, \(\bQ_\gamma\colon\SS^1\to\SO(3)\) defines a loop in \(\SO(3)\).

From the covering space theory, every loop \(\bQ\colon\SS^1\to\SO(3)\) is either liftable or non-liftable to the universal cover \(\SU(2)\) of \(\SO(3)\). This defines a \(\ZZ_2\)-valued function \({\frak q}\) on the space of loops in \(\DD'\):
\begin{align}\label{eq:Z2ValuedQuadraticForm}
    {\frak q}(\gamma) = 
    \begin{cases}
        0, & \text{if \(\bQ_\gamma\) is not liftable to \(\SU(2)\)},\\ 
        1, & \text{if \(\bQ_\gamma\) is liftable to \(\SU(2)\)}.
    \end{cases}
\end{align}
The value \({\frak q}(\gamma)\) depends only on the homology class of \(\gamma\), and \({\frak q}\colon H_1(M;\ZZ)\to \ZZ_2\) is a quadratic form (\ie\@ it satisfies
\({\frak q}(\gamma_1+\gamma_2) = {\frak q}(\gamma_1)+{\frak q}(\gamma_2) + \#(\gamma_1\cap\gamma_2)\) mod \(2\))
\cite{kauffman1977immersions}.

This invariant \({\frak q}(\gamma)\) classifies a branch point via any loop \(\gamma\) enclosing it, as follows. 

Let \(z_0\in\DD\) be an isolated branch point of \(\br\), and let \(B\subset\DD\) be a small disk centered at \(z_0\) containing no other branch points. Then \emph{the immersion \(\br|_{\DD'\setminus B}\) extends to an immersion on \(\DD'\cup \{z_0\}\) if and only if \({\frak q}(\partial B)=0\)}  \cite[\S~2.2]{chern2018shape}.  
In particular, there exists a perturbation \(\mathring\br\), supported in \(B\), such that \(\br + \epsilon \mathring\br\) is an immersion on \(\DD'\cup\{z_0\}\) if and only if \({\frak q}(\partial B)=0\).

Accordingly, branch points with \({\frak q}(\partial B)=0\) are \emph{removable} under perturbation. In contrast, if \({\frak q}(\partial B)=1\), the branch point persists under any local perturbation; such points are \emph{genuine branch points}.%
\footnote{
    This notion of removability under perturbation differs from the notion of \emph{false branch points} introduced by Osserman. A \emph{false branch point} is a non-immersed point that becomes immersed after a reparameterization. In this case, the surface is locally embedded near the point without any perturbation. In contrast, removable branch points in our sense become locally embedded only after a perturbation. False branch points are technically immersed point, whereas the ability to remove a non-immersed point by perturbation is not captured by this true/false classification.}

For a generalized minimal surface \(\br\) defined by \eqref{eq:HarmonicSurfaceFromNullVelocity}, each branch point corresponds to a zero of the null holomorphic velocity \(\bphi\). The following theorem relates the type of branch point to the order of this zero.  
Here, the order \(m\in\NN\) of a zero \(z_0\) of \(\bphi\) is defined so that one can write \(\bphi(z) = (z-z_0)^m \bpsi(z)\) for some \(\bpsi\) is holomorphic and \(\bpsi(z_0)\neq 0\).
\begin{theorem}\label{thm:RemovableBranch}
    Let \(z_0\) be a zero of a null holomorphic triplet \(\bphi\) of order \(m\in\NN\). Then the corresponding branch point of the generalized minimal surface \(\br\), defined by \eqref{eq:HarmonicSurfaceFromNullVelocity}, is removable if and only if \(m\) is even.
\end{theorem}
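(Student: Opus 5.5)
We compute ${\frak q}(\partial B)$ directly from the local structure of $\bphi$ near $z_0$, using the criterion quoted above from \cite[\S~2.2]{chern2018shape}: the branch point is removable iff ${\frak q}(\partial B)=0$, i.e.\ iff the frame loop $\bQ_{\partial B}$ is \emph{not} liftable to $\SU(2)$. The argument has three steps. First, we reduce the frame loop to the composition of the tangent map and the Gauss map along the circle. Second, we deform it to a model loop whose $\SO(3)$ homotopy class is easily read off. Third, we count the class mod~$2$ in terms of $m$.

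\emph{Reduction.} Write $\bphi(z)=(z-z_0)^m\bpsi(z)$ with $\bpsi(z_0)\neq\bzero$; $\bpsi$ is null because $\bphi$ is. Take $\gamma(t)=z_0+\varepsilon e^{\ii t}$. Using $\br_x=\Re\bphi$ and $\br_y=-\Im\bphi$, the tangent vector along $\gamma$ is
\[
d\br(\gamma'(t))=\Re\bigl(\bphi(\gamma(t))\,\ii\varepsilon e^{\ii t}\bigr)=\varepsilon^{m+1}\Re\bigl(\ii e^{\ii(m+1)t}\bpsi(\gamma(t))\bigr).
\]
Nullity of $\bpsi$ together with $\bpsi\neq\bzero$ forces $\Re\bpsi$ and $\Im\bpsi$ to be orthogonal, of equal nonzero length, and linearly independent. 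They therefore define an oriented orthonormal tangent frame $(\mathbf{e}_1,\mathbf{e}_2)$, with unit normal $\bn=\mathbf{e}_1\times\mathbf{e}_2$ depending continuously on $z$ up to $z_0$. Hence $\bT(t)=\cos\theta(t)\,\mathbf{e}_1-\sin\theta(t)\,\mathbf{e}_2$ up to a sign convention, where $\theta(t)=(m+1)t+\pi/2$, and $\bN=\bn$.

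\emph{Deformation.} Shrink $\varepsilon\to0$. By continuity of $\bpsi$, the frame $[\mathbf{e}_1,\mathbf{e}_2,\bn]$ along $\gamma$ becomes homotopic to the constant frame at $z_0$. The loop $\bQ_\gamma$ is thus homotopic in $\SO(3)$ to a loop of rotations about the fixed axis $\bn(z_0)$, with total angle $(m+1)\cdot 2\pi$ traversed as $t$ goes from $0$ to $2\pi$. This deformation is legitimate because the branch point is isolated, so every $\gamma$ with small $\varepsilon$ lies in $\DD'$, and the construction varies continuously in $\varepsilon$.

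\emph{Counting.} A rotation loop about a fixed axis with winding number $k$ represents $k \bmod 2$ in $\pi_1(\SO(3))=\ZZ_2$. It lifts to a loop in $\SU(2)$ iff $k$ is even. Here $k=m+1$, so the loop lifts, giving ${\frak q}(\partial B)=1$ (a genuine branch point), exactly when $m$ is odd. When $m$ is even, $k$ is odd, the loop does not lift, and ${\frak q}(\partial B)=0$, so the branch point is removable. As a sanity check, $m=0$ gives a regular point with winding number $1$, consistent with the convention in \eqref{eq:Z2ValuedQuadraticForm} that a small loop around a regular point has ${\frak q}=0$.

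The main obstacle is the sign and orientation bookkeeping in the definition of ${\frak q}$ in \eqref{eq:Z2ValuedQuadraticForm}, whose convention is inverted relative to liftability. We must confirm that the regular case $m=0$ yields ${\frak q}=0$, so that the parity shift by one from the $(m+1)$ winding is correctly attributed. We also need care that the homotopy keeps $\bT$ well defined, which holds because $\bphi\neq\bzero$ on $\gamma$.
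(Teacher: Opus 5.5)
Your proposal is correct and follows essentially the same route as the paper. Both factor $\bphi=(z-z_0)^m\bpsi$, write the tangent along a small circle as proportional to $\Re\bigl(\ii e^{\ii(m+1)\theta}\bpsi\bigr)$, reduce the frame loop to a rotation by total angle $(m+1)2\pi$ about a fixed normal, and read off (non-)liftability from the parity of $m+1$. Your explicit homotopy $\varepsilon\to 0$ and your use of the nullity of $\bpsi$ to get an orthonormal frame are cleaner versions of the paper's $O(\epsilon)$ approximation by $\bpsi_0$ and its quaternion-lift check.
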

\begin{proof}
See Appendix~\ref{app:ProofOfRemovableBranch}.
\end{proof}




\subsection{Weierstrass--Enneper Parameterization}
\label{sec:WeierstrassEnneper}
Recall \Cref{sec:NullHolomorphicCurves} that
constructing a minimal surface amounts to finding three holomorphic functions \(\phi_1,\phi_2,\phi_3\) that satisfy the null condition \eqref{eq:NullCondition}. 
This condition allows us to eliminate one of the three functions.
For example, \(\bphi\) is guaranteed to be null if 
\begin{align}\label{eq:PhiPropToExpressOfG}
    \bphi\propto \textstyle({1\over 2}({1-g^2}), {\ii\over 2}({1+g^2}), g)
\end{align}
for an arbitrary function \(g\colon\DD\to\CC\). 
One may recognize \eqref{eq:PhiPropToExpressOfG} as the stereographic projection from the line to the complex conic obtained by projectivizing \eqref{eq:NullCondition} expressed in homogeneous coordinates.  

Introducing another arbitrary function \(f\colon\DD\to\CC\) to account for the proportionality factor, we obtain
\begin{align}\label{eq:WEforPhi}
    \phi_1 = 
    \textstyle
    {1\over 2}f(1-g^2),\quad\phi_2 = {\ii\over 2}f(1+g^2),\quad \phi_3 = fg.
\end{align}
Substituting \eqref{eq:WEforPhi} into \eqref{eq:HarmonicSurfaceFromNullVelocity} yields the Weierstrass--Enneper formula \eqref{eq:WE}.

Note that \(f\) and \(g\), expressed in terms of \(\bphi\), are given by
\begin{align}\label{eq:WEfg}
    f = \phi_1 - \ii\phi_2,\quad g = {\phi_3/(\phi_1-\ii\phi_2)}.
\end{align}
In particular, for a holomorphic curve \(\bphi\), the function \(f\) is holomorphic, whereas \(g\) is generally meromorphic (\ie\@ it may have pole singularities) due to the quotient.

In fact, the meromorphic function \(g\colon\DD\to\CC\cup\{\infty\}\) is the Gauss map of \(\br\), after identifying \(\CC\cup\{\infty\}\) with the Riemann sphere via the standard stereographic projection.  A pole in \(g\) occurs whenever the normal vector of \(\br\) points in the direction \((0,0,1)\).  In particular, pole singularities in \(g\) are inevitable depending on how the surface \(\br\) is oriented in \(\RR^3\).
While holomorphic functions can be constructed directly, constructing meromorphic functions is computationally more challenging.  It often requires prescribing the locations and orders of poles, which introduces additional complexity. 

Another challenge arising in generating minimal surfaces using \eqref{eq:WEforPhi} is that a holomorphic function \(f\) generically has zeros in \(\DD\).  Each simple zero of \(f\) that is not a pole of \(g\) corresponds to a simple zero of \(\bphi\), which leads to a non-removable branch point of \(\br\) by \Cref{thm:RemovableBranch}.  These genuine branch points should not be present in solutions of the Plateau problem (\Cref{prob:plt_prob}).  

Hence, for generating regular minimal surfaces, not only that one needs to handle poles in \(g\), one also need to enforce constraints that \(f\) has zeros only at the poles of \(g\) and nowhere else in the domain.

Next, we propose an alternative to the Weierstrass--Enneper stereographic formula \eqref{eq:PhiPropToExpressOfG} for enforcing the null condition. 
The alternative formulation avoids the need to explicitly handle poles, and the minimal surfaces it represent are automatically regular or feature at most removable branch points.

\subsection{Euclid's Spinor Representation} 

Our parameterization is motivated by \emph{Euclid's formula} for the \emph{Pythagorean triples}.  The classical Euclid formula is a parameterization of the solutions to \(a^2 + b^2 = c^2\) via \(a=m^2-n^2\), \(b=2mn\), \(c=m^2+n^2\).
Motivated by this formula, we represent null triples \((\phi_1,\phi_2,\phi_3)\) (satisfying \eqref{eq:NullCondition}) as
\begin{align}\label{eq:Pythagorean}
    \phi_1 = p^2-q^2,\quad \phi_2 = -\ii(p^2 + q^2),\quad \phi_3 = 2pq.
\end{align}
This parameterization of null triples is known as the null--spinor correspondence \cite{budinich1986null}.  
    Parameterization for minimal surface using this null--spinor correspondence is called the \emph{spinor representation for minimal surfaces} \cite{kusner1995spinor}.

Unlike the parameters \(f\) and \(g\) in \eqref{eq:WEfg}, which may exhibit singularities, the following theorem shows that \(p,q\) are holomorphic.


\begin{lemma}\label{lem:even_order}
Let $h$ be a holomorphic function on \(\DD\). Then there exists a holomorphic function \(u\) on \(\DD\) such that \(u^2 = h\) if and only if every zero of \(h\) has even order.  The function \(u\) is unique up to a sign.
\end{lemma}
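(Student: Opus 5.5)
If \(h\equiv 0\), the statement holds trivially: every point is a zero of infinite order, we take \(u=0\), and \(u^2=0\) forces \(u=0\). So assume from now on that \(h\not\equiv 0\). Its zeros are then isolated, and each has a finite order.

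\emph{Necessity.} Suppose \(u^2=h\) with \(u\) holomorphic. Let \(z_0\) be a zero of \(h\). Then \(u(z_0)=0\), and we can write \(u(z)=(z-z_0)^k v(z)\) with \(v\) holomorphic and \(v(z_0)\neq 0\). It follows that \(h=(z-z_0)^{2k}v^2\) with \(v^2(z_0)\neq 0\), so the zero has order \(2k\), which is even.

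\emph{Sufficiency.} Let \(Z\) be the discrete zero set of \(h\) in \(\DD\), and consider the logarithmic derivative \(w = h'/(2h)\) on \(\DD\setminus Z\).
\begin{itemize}
\item Near a zero \(z_0\) of order \(2k\), write \(h=(z-z_0)^{2k}g\) with \(g(z_0)\neq0\). Then \(w = k/(z-z_0) + g'/(2g)\), so \(w\) has a simple pole at \(z_0\) with integer residue \(k\).
\item Fix a base point \(a\in\DD\setminus Z\) and a square root \(u_a\) of \(h(a)\). Define
\[
u(z) = u_a \exp\Big(\int_a^z w\,d\zeta\Big).
\]
\item Since \(\DD\) is simply connected, the only obstruction to path-independence on \(\DD\setminus Z\) is the periods of \(w\) around points of \(Z\). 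Each such period is \(2\pi\ii\) times an integer residue, so the exponential is single-valued and \(u\) is holomorphic on \(\DD\setminus Z\).
\item Differentiating gives \((u^2/h)' = u^2/h\,(2w - h'/h)=0\). Hence \(u^2/h\) is constant on the connected set \(\DD\setminus Z\), and it equals \(1\) at \(a\). So \(u^2=h\) there.
\item Near each \(z_0\in Z\) we have \(|u|=|h|^{1/2}\), which is bounded. By Riemann's removable singularity theorem \(u\) extends holomorphically to all of \(\DD\), and \(u^2=h\) holds on \(\DD\) by continuity.
\end{itemize}
The main point requiring care is this single-valuedness step. It is exactly where the even-order hypothesis enters: it makes the residues of \(h'/(2h)\) integers rather than half-integers.

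\emph{Uniqueness.} If \(u_1^2=u_2^2=h\), then \((u_1-u_2)(u_1+u_2)=0\) on \(\DD\). Holomorphic functions on a connected domain form an integral domain, by the identity theorem: if neither factor vanished identically, both would have isolated zeros, and their product could not vanish everywhere. Therefore \(u_1=\pm u_2\).
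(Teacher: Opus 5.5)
Your proof is correct, but it takes a different route from the paper's. The paper works local-to-global. It builds local roots $(z-z_0)^k e^{\frac12\log g}$ near each point, which is where the even-order hypothesis enters. It then shows that two local roots on overlapping disks agree up to sign, extends each one to the largest disk about its center, and assembles a global root by analytic continuation along curves and the Monodromy Theorem on the simply connected domain.

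You instead write $u$ down globally as $u_a\exp\int_a^z h'/(2h)\,d\zeta$. This buys three things:
\begin{itemize}
\item the hypothesis appears exactly as the integrality of the residues of $h'/(2h)$;
\item the zeros of $h$ are handled afterwards by Riemann's removable singularity theorem rather than inside the local construction;
\item the argument is shorter and gives an explicit formula.
\end{itemize}

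The one step you should spell out is single-valuedness, which rests on the homology form of the residue theorem. For a closed curve $\gamma$ in $\DD\setminus Z$ one has $\int_\gamma w\,d\zeta = 2\pi\ii\sum_{z_0\in Z} n(\gamma,z_0)\operatorname{Res}_{z_0}w$. This is a finite sum: since $\gamma$ is null-homologous in $\DD$, the set $\{z : n(\gamma,z)\neq 0\}$ has compact closure in $\DD$, so it meets the discrete set $Z$ in finitely many points.

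The paper's continuation argument is a general gluing mechanism for locally defined holomorphic objects with a finite ambiguity. It uses neither residues nor removable singularities, but it needs considerably more bookkeeping.

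Your uniqueness argument via the identity theorem (holomorphic functions on a connected domain form an integral domain) is global and never divides by a function that may vanish. The paper obtains the sign ambiguity from the local quotient $u_U/u_V$ in Claim~2, which tacitly needs the singularities of that quotient at zeros of $h$ to be removable.
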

\begin{proof}
See Appendix~\ref{app:ProofOfEvenOrderLemma}.
\end{proof}

\begin{theorem}\label{thm:ExistenceOfPythagoreanTriple}
    Let \((\phi_1,\phi_2,\phi_3)\) be a null holomorphic triple on \(\DD\) representing a regular minimal surface (via \eqref{eq:HarmonicSurfaceFromNullVelocity}).  Then there exist holomorphic functions \(p,q\) on \(\DD\) such that \eqref{eq:Pythagorean} holds.
\end{theorem}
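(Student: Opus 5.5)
From \eqref{eq:Pythagorean}, the prescribed spinor must satisfy
\[
p^2 = \tfrac12(\phi_1 + \ii\phi_2),\qquad q^2 = -\tfrac12(\phi_1 - \ii\phi_2),\qquad pq = \tfrac12\phi_3 .
\]
So I will set \(h_1 \coloneqq \tfrac12(\phi_1+\ii\phi_2)\) and \(h_2 \coloneqq -\tfrac12(\phi_1-\ii\phi_2)\), both holomorphic on \(\DD\). The nullity condition \eqref{eq:NullCondition} can be rewritten as
\[
(\phi_1+\ii\phi_2)(\phi_1-\ii\phi_2) = -\phi_3^2,
\qquad\text{equivalently}\qquad
h_1h_2 = \tfrac14\phi_3^2 .
\]
The plan is to take holomorphic square roots of \(h_1\) and \(h_2\) using \Cref{lem:even_order}, and then fix the relative sign so that \(pq=\phi_3/2\). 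Once that holds, \(\phi_1=p^2-q^2\) and \(\phi_2=-\ii(p^2+q^2)\) follow by solving the two linear equations for \(\phi_1,\phi_2\).

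The key step is showing that every zero of \(h_1\) (and likewise of \(h_2\)) has even order. I first dispose of the degenerate cases. If \(h_1\equiv 0\), then \(p=0\) works, and \(\phi_3^2 = 4h_1h_2\equiv 0\) gives \(\phi_3\equiv0\). The same applies if \(h_2\equiv0\).

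Otherwise, suppose \(z_0\) is an isolated zero of \(h_1\) of order \(a\), and let \(b\ge0\) be the order of \(h_2\) at \(z_0\). Regularity says \(\bphi(z_0)\neq 0\). Since \(\phi_1=h_1-h_2\) and \(\phi_2=-\ii(h_1+h_2)\), this forces \((h_1,h_2)(z_0)\neq(0,0)\), so \(b=0\). The product \(h_1h_2=\phi_3^2/4\) has even order at \(z_0\), hence \(a=a+b\) is even. The same argument applies symmetrically to \(h_2\).

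I expect this parity argument to be the main point. The obstacle is that it uses regularity essentially: without it, e.g. \(a=b=1\) is possible and the square roots fail. The statement assumes regularity, so this is covered.

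\Cref{lem:even_order} then gives holomorphic \(p_0,q_0\) with \(p_0^2=h_1\) and \(q_0^2=h_2\). It remains to fix signs. The function \(p_0q_0\) satisfies
\[
(p_0q_0)^2 = h_1h_2 = (\phi_3/2)^2,
\]
so \((p_0q_0-\phi_3/2)(p_0q_0+\phi_3/2)\equiv0\) on \(\DD\). The ring of holomorphic functions on the connected domain \(\DD\) is an integral domain, so one factor vanishes identically. I then choose \(q=\pm q_0\) accordingly and \(p=p_0\). This yields \(pq=\phi_3/2\), and \eqref{eq:Pythagorean} holds.
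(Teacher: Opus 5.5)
Your proof is correct and is essentially the paper's argument: take holomorphic square roots of $\frac{1}{2}(\phi_1+\ii\phi_2)$ and $-\frac{1}{2}(\phi_1-\ii\phi_2)$ via \Cref{lem:even_order}, then fix the relative sign using $(2pq)^2=\phi_3^2$. The one difference is how you get the even-order property: you derive it directly from $h_1h_2=\phi_3^2/4$ and the fact that regularity forbids common zeros of $h_1,h_2$, whereas the paper works through the zero/pole orders of the Weierstrass--Enneper data $f,g$; your version also covers the case $\phi_1-\ii\phi_2\equiv 0$, where the paper's $g$ is undefined (there regularity makes $h_1$ zero-free, so its square root exists).
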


\begin{proof}
    Let \(\bphi = (\phi_1,\phi_2,\phi_3)\) be a non-vanishing null holomoprhic triple.  Let \(f = \phi_1-\ii\phi_2\) and \(g = \phi_3/(\phi_1-\ii\phi_2)\) be Weierstrass--Enneper's functions \eqref{eq:WEfg}.  Then \(f\) vanishes only at the poles of \(g\).  Moreover, from the expression \eqref{eq:WEforPhi} and that \(\bphi\) is holomorphic and nonvanishing, the order of any zero of \(f\) must be exactly twice the order of the pole of \(g\) at that point.
    Now consider the functions \(\phi_1+\ii\phi_2\) and \(\phi_1 - \ii\phi_2\), which are expressed in terms of \(f\) and \(g\) as
    \begin{align}
        \phi_1 + \ii\phi_2 = -fg^2,\quad \phi_1 - \ii\phi_2 = f.
    \end{align}
    Observe that if \(\phi_1 + \ii\phi_2 = -fg^2\) has a zero at \(z\), then \(g\) must have a zero and \(f\) must not have a zero at \(z\).  In particular, the zeros of \(fg^2\) must have even order.  
    By \Cref{lem:even_order}, there exists a holomorphic function \(p\) on \(\DD\) such that \(p^2 = {\phi_1 + \ii\phi_2\over 2}\).  Similarly, since the orders of zeros of \(f\) are twice of the orders of poles of \(g\), the holomorphic function \(f\) only have zeros of even order.  Thus there exists a holomorphic function \(q\) on \(\DD\) such that \(q^2 = -{\phi_1 - \ii\phi_2\over 2}\).
    Now we have
    \begin{align}\label{eq:Phi1Phi2ByPQ}
        \phi_1 = p^2 - q^2,\quad \phi_2 = -\ii(p^2 + q^2).
    \end{align}
    It remains to check whether \(\phi_3 = 2pq\).  Note that \((2pq)^2 = 4p^2q^2 = 4({\phi_1 + \ii\phi_2\over 2})({\phi_1-\ii\phi_2\over 2}) = -\phi_1^2 -\phi_2^2 = \phi_3^2\).
    By the uniqueness statement of \Cref{lem:even_order}, we have \(\phi_3 = 2pq\) or \(\phi_3 = -2pq\).  In the latter case, we introduce a sign flip in \(p\).  Let \(\tilde p\coloneqq -p\).  Then \eqref{eq:Phi1Phi2ByPQ} remains intact \(\phi_1 = \tilde p^2 - q^2\), \(\phi_2 = -\ii(\tilde p^2 + q^2)\); while now \(\phi_3 = 2\tilde pq\).
\end{proof}

The following converse statement of \Cref{thm:ExistenceOfPythagoreanTriple} holds as well.  

If \(p,q\) are any holomorphic functions on \(\DD\), then \((\phi_1,\phi_2,\phi_3)\) defined by \eqref{eq:Pythagorean} is null holomorphic, and the surface \(\br\) given by \eqref{eq:HarmonicSurfaceFromNullVelocity} is a generalized minimal surface.  Moreover, \((\phi_1,\phi_2,\phi_3)\) vanishes at \(z\) only when \(p,q\) has a common zero at \(z\).  Thus, generically when the zeros of \(p,q\) do not coincide, the surface \(\br\) is a regular minimal surface.

In the non-generic case where the zeros of \(p,q\) do coincide, the zeros of \(\phi_1,\phi_2,\phi_3\) have an even order.  By \Cref{thm:RemovableBranch}, the generalized minimal surface \(\br\) has branched points that are all removable under perturbation.

\begin{corollary}\label{cor:PQToMinimal}
    For any holomorphic functions \(p(\zeta)\) and \(q(\zeta)\) on \(\DD\) and constant \(\bc\in\RR^3\), the surface
    \begin{align}\label{eq:RWrittenByPQ}
    \textstyle
        \br(z)\coloneqq \Re \int_0^z
        \Big(p^2-q^2,-\ii(p^2 + q^2),2pq\Big)\, d\zeta + \bc
    \end{align}
    is either a regular minimal surface (if \(p,q\) do not have common zero), or a generalized minimal surface whose branch points are all removable.
\end{corollary}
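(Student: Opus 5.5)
The plan is to verify three things in turn: that \(\bphi\) defined by \eqref{eq:Pythagorean} is a null holomorphic triple, that it vanishes exactly at common zeros of \(p\) and \(q\), and that at such points its order of vanishing is even. \Cref{thm:RemovableBranch} then finishes the argument.

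\textbf{Holomorphy, the null condition, and well-definedness.} Since \(p,q\) are holomorphic, \(\phi_1=p^2-q^2\), \(\phi_2=-\ii(p^2+q^2)\), \(\phi_3=2pq\) are holomorphic. A direct expansion gives
\begin{align*}
\phi_1^2+\phi_2^2+\phi_3^2=(p^2-q^2)^2-(p^2+q^2)^2+4p^2q^2=-4p^2q^2+4p^2q^2=0 .
\end{align*}
Because \(\DD\) is simply connected, the integral in \eqref{eq:RWrittenByPQ} is path independent, so \(\br\) is well defined. It is then the real part of the null holomorphic curve \(\brho=\int_0^z\bphi\,d\zeta\), shifted by \(\bc\). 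By \Cref{sec:NullHolomorphicCurves}, \(\br\) is harmonic, and it is conformal by \eqref{eq:NullCondition}. Hence \(\br\) is a generalized minimal surface in the sense of \Cref{def:GeneralizedMinimalSurface}.

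\textbf{Location of zeros.} Note that \(\phi_1+\ii\phi_2=2p^2\) and \(\phi_1-\ii\phi_2=-2q^2\). So \(\bphi(z_0)=0\) forces \(p(z_0)=q(z_0)=0\), and conversely a common zero makes all three components vanish. If \(p,q\) have no common zero, \(\bphi\) never vanishes and \(\br\) is regular.

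\textbf{Parity of the order at a common zero.} This is the only step needing care. Suppose \(p\) and \(q\) vanish at \(z_0\) and are not both identically zero. Write \(p=(z-z_0)^a\tilde p\) and \(q=(z-z_0)^b\tilde q\), where \(\tilde p(z_0),\tilde q(z_0)\neq0\); if one function is identically zero, take its order to be \(\infty\). The order of \(\bphi\) at \(z_0\) is \(\min\) of the orders of its components. Equivalently, it is the minimum of the orders of \(\phi_1\pm\ii\phi_2\) and \(\phi_3\), because \(\phi_1,\phi_2\) are invertible linear combinations of \(\phi_1\pm\ii\phi_2\). These orders are \(2a\), \(2b\), and \(a+b\). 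Since \(a+b\ge 2\min(a,b)\), the minimum is \(m=2\min(a,b)\), which is even.

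If \(p\equiv q\equiv0\), then \(\bphi\equiv0\) and \(\br\) is constant. This degenerate case must be excluded or treated as trivial, since \Cref{thm:RemovableBranch} presumes isolated zeros. In every other case the zeros of \(\bphi\) are isolated with even order, so \Cref{thm:RemovableBranch} shows every branch point is removable.
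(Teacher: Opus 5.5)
Your proposal is correct and follows the same route as the paper: verify that \eqref{eq:Pythagorean} is null, observe that \(\bphi\) vanishes only at common zeros of \(p,q\), note that the order there is even, and invoke \Cref{thm:RemovableBranch}. You also supply what the paper only asserts, namely the explicit order computation \(m=\min(2a,2b,a+b)=2\min(a,b)\) via \(\phi_1\pm\ii\phi_2\), and you correctly flag the degenerate case \(p\equiv q\equiv 0\), which the paper leaves implicit.
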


\subsection{Neural Representation of Minimal Surfaces}
\label{Sec:Neural_Architecture}

Building on \Cref{cor:PQToMinimal}, we translate Euclid's spinor representation of minimal surface into a trainable neural architecture. The overall pipeline is illustrated in \Cref{fig:NN_diagram}.

The two holomorphic functions $p$ and $q$ are represented by complex-valued neural networks.  A complex-valued neural network is a Multilayer Perceptrons (MLP) where each of the input, output, intermediate state variables and parameters are complex numbers.
It represent a complex-valued function of one complex variable when both the input and output is a single node.  The complex-valued function is guaranteed to be holomorphic if the activation functions are holomorphic.  We use the exponential function as the activation function.

Given holomorphic neural functions $p$ and $q$, we obtain $\bphi(z) = (\phi_1(z),\phi_2(z),\phi_3(z))$ using \eqref{eq:Pythagorean}.
Note that by \Cref{thm:ExistenceOfPythagoreanTriple} and the Universal Approximation Theorem for holomorphic functions \cite{calafa2024physics}, one can approximate any simply-connected minimal surface on any compact subset of $\mathbb{D}$ using this neural representation.
The minimal surface \(\br\colon\DD\to\RR^3\) is then evaluated as the real part of the integration of \((\phi_1,\phi_2,\phi_3)\) by \eqref{eq:HarmonicSurfaceFromNullVelocity}. 
In practice, this final integration is approximated by a numerical quadrature.  Specifically, we employ the trapezoidal rule.  Taking some number \(N\in\NN\) of partitions (we use \(N=100\) for all our examples), we evaluate the surface as
\begin{align}
\textstyle
    \br(z) = \Re\left[\sum_{k=1}^N\left( {\bphi((k-1){z\over N})+\bphi(k{z\over N})}\right){z\over 2N}\right],\quad z\in\DD.
\end{align}
The error introduced by this numerical quadrature is \(O(1/N^2)\) (as shown in \Cref{fig:compare_int_N_analytic}), while higher order approximations are also widely available, \eg\@ Simpson's rule yields an error of \(O(1/N^4)\).

Every instance of our neural minimal surface (even without any training with all parameters of the MLPs randomized) is a \(O(1/N^2)\) distant away from an exact minimal surface.  By \Cref{cor:PQToMinimal}, this minimal surface is generically regular.


\section{Approach to the Plateau Problem}
Let us return to the Plateau problem (\Cref{prob:plt_prob}).  Here, we propose an optimization-based approach.  The unknown surface \(\br\) is represented by a neural network as described in \Cref{Sec:Neural_Architecture}, which always yields a minimal surface.  The remaining task is to ensure that the boundary image \(\br|_{\partial\DD}\) matches a given space curve \(\Gamma\colon\SS^1\to\RR^3\).

A tempting loss function for this fidelity problem is \(\|\br|_{\partial\DD} - \Gamma\|^2\).  
However, requiring the boundary map \(\br|_{\partial\DD}\) to be equal to the parameterized curve \(\Gamma\) makes the problem over-constrained.  Note that the original problem only requires the two curves to fit as unparameterized curves. 
The appropriate formulation is to require \(\br|_{\partial\DD}\) to coincide with \(\Gamma\circ\tilde\Phi\) for some reparameterization \(\tilde\Phi\in\Homeo^+(\SS^1)\), where \(\Homeo^+(\SS^1)\) denotes the space of all orientation-preserving homeomorphisms from \(\SS^1\) to itself (\Cref{fig:reparametrization}).






\begin{problem} \label{optim}
    Given an arbitrary rectifiable Jordan curve $\Gamma:\SS^1 \to \RR^3$, find two holomorphic functions \(p,q\) on \(\DD\) and \(\tilde\Phi\in\Homeo^+(\SS^1)\) that solve
\begin{equation}
\textstyle
{\rm minimize}_{(p,q),\tilde\Phi}
\int_{\mathbb{S}^1} \left| \br|_{\mathbb{S}^1} - \Gamma \circ \tilde{\Phi} \right|^2d\mu,
\label{eq:energy}
\end{equation}
where \(\br\) is defined in terms of \(p,q\) by \eqref{eq:RWrittenByPQ}, and \(\mu\) is any measure on \(\SS^1\).
\end{problem}

\subsection{Reparameterizations on \(\SS^1\)}
\label{Sec:homeo}

To incorporate the homeomorphism \(\tilde\Phi\), we adopt another neural network to represent the homeomorphism.  
First, identify \(\SS^1\) with \(\RR/\ZZ\).
Then every orientation-preserving homeomorphism on \(\SS^1\) is equivalent to an increasing function \(\Psi\) from \([0,1]\) onto \([b,b+1]\) for some \(b\in\RR\).
Every such function can be written as 
\begin{align}
    \label{eq:homeo_NN}
    \textstyle
    \Psi(t)\coloneqq({\int_{0}^{t} \rho(s) \,ds})/({\int_{0}^{1} \rho(s) \,ds}) + b
\end{align}
for some positive-valued function \(\rho\colon [0,1]\to\RR_{>0}\).  We let \(\rho\) be a real-valued single-input-single-output MLP with ReLU activations for hidden layers, and an exponential function as the last activation layer (to ensure positive values).  We also let \(b\in\RR\) be a learnable parameter.

We show that every orientation-preserving homeomorphism between \([0,1]\) and \([b,b+1]\) can be approximated by some \(\Psi\) defined by \eqref{eq:homeo_NN}.

\begin{theorem} \label{0-1 to 0-1 homeo_NN}
    Let $\Phi\colon [0,1] \rightarrow[0,1]$ be an increasing homeomorphism. For any $\epsilon >0$, there exists $\rho\colon [0,1] \rightarrow \mathbb{R}_{>0}$, an MLP with ReLU activations in hidden layers and exponential map as output activation, such that $\|\Psi-\Phi\|_{\infty}<\epsilon$ where $\Psi(t)=({\int_{0}^{t} \rho(s) \,ds})/({\int_{0}^{1} \rho(s) \,ds})$.
\end{theorem}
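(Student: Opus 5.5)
The plan is to reduce to approximating a smooth, strictly positive density, and then use the universal approximation theorem for ReLU networks on its logarithm.

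\emph{Step 1: smooth the target.} Since \(\Phi\) is continuous and increasing with \(\Phi(0)=0\), \(\Phi(1)=1\), it may have no density (it could be singular) and its derivative may vanish. Fix \(\delta\in(0,1)\) and set \(\Phi_\delta(t)=(1-\delta)\Phi(t)+\delta t\). Then \(\|\Phi_\delta-\Phi\|_\infty\le\delta\).

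Next, extend \(\Phi_\delta\) to \(\RR\) by \(\Phi_\delta(t)=t\) outside \([0,1]\); this is continuous since the values agree at \(0\) and \(1\). Mollify with a standard kernel \(\eta_h\) of width \(h\). The convolution \(\Phi_\delta*\eta_h\) is smooth and converges uniformly to \(\Phi_\delta\) on \([0,1]\) as \(h\to 0\), by uniform continuity. Its derivative is \(\Phi_\delta'*\eta_h\ge\delta>0\), since \(\Phi_\delta\) minus \(\delta t\) is nondecreasing. Affinely renormalize it so that it maps \(0\mapsto0\) and \(1\mapsto1\); the renormalization constants tend to the identity, so the error is small. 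This gives a smooth increasing \(G\) with \(G(0)=0\), \(G(1)=1\), \(g\coloneqq G'\ge c>0\) continuous on \([0,1]\), and \(\|G-\Phi\|_\infty<\epsilon/2\).

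\emph{Step 2: approximate the density by a network.} Since \(g\) is continuous and bounded below by \(c>0\), \(\log g\) is continuous on \([0,1]\). By the universal approximation theorem for ReLU MLPs (a one-hidden-layer ReLU net can even interpolate the piecewise-linear interpolant exactly), there is a ReLU network \(N\) with \(\|N-\log g\|_\infty<\eta\). Set \(\rho=\exp\circ N\), which is exactly of the required architecture and positive. Then \(|\rho-g|\le g(e^{\eta}-1)\le \|g\|_\infty(e^\eta-1)\eqqcolon\eta'\) uniformly.

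\emph{Step 3: control the normalized integral.} Write \(A(t)=\int_0^t\rho\), so \(|A(t)-G(t)|\le\eta'\) and \(|A(1)-1|\le\eta'\). Then
\begin{align*}
|\Psi(t)-G(t)|=\Big|\tfrac{A(t)}{A(1)}-G(t)\Big|\le \tfrac{|A(t)-G(t)|+G(t)|1-A(1)|}{A(1)}\le\tfrac{2\eta'}{1-\eta'},
\end{align*}
which is less than \(\epsilon/2\) for small \(\eta\). The triangle inequality then gives \(\|\Psi-\Phi\|_\infty<\epsilon\).

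The only delicate step is Step 1: \(\Phi\) need not be absolutely continuous, and it can have flat derivative. This is handled by mixing with the identity, which makes the density bounded below, and then mollifying; everything else is routine.
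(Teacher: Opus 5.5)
Your proof is correct, but it takes a different route from the paper's.

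\textbf{The paper's route.} The paper replaces $\Phi$ by its piecewise-linear interpolant $\Phi_N$ on a fine partition. The density $\rho_N$ of $\Phi_N$ is piecewise constant, with slopes $s_k=(\Phi(t_{k+1})-\Phi(t_k))/(t_{k+1}-t_k)$. These slopes are strictly positive precisely because $\Phi$ is a homeomorphism. Since $\rho_N$ is discontinuous, the paper needs an auxiliary lemma that smooths each jump by a linear ramp, giving an $L^1$ approximation by positive continuous functions. Only then does it apply the universal approximation theorem to the logarithm. Throughout, it controls $\Psi$ only through $\|\rho-\rho_N\|_{L^1}$, which suffices because $|A(t)-\Phi_N(t)|\le\|\rho-\rho_N\|_{L^1}$.

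\textbf{Your route.} You mix with the identity and mollify, which gives in one step a smooth density bounded below by a positive constant. So you apply the universal approximation theorem once, to a continuous $\log g$. You obtain uniform, not merely $L^1$, closeness of $\rho$ to $g$, and you skip the ramp lemma entirely.

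\textbf{Generality.} Your argument never uses strict monotonicity. The lower bound on the density comes from the $\delta t$ term rather than from $s_k>0$. It therefore covers any continuous nondecreasing $\Phi$ with $\Phi(0)=0$ and $\Phi(1)=1$, flat pieces included. The paper's argument would need a similar perturbation to handle zero slopes.

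\textbf{Cost of your route.} The price is the mollification machinery. You should state the derivative bound the way you actually justify it: $(\Phi_\delta-\delta t)*\eta_h$ is smooth and nondecreasing, and $(\delta t)*\eta_h=\delta t$ for a symmetric kernel. Writing $\Phi_\delta'*\eta_h$ is misleading, since $\Phi_\delta'$ need not exist as a function when $\Phi$ is singular. By comparison, the paper's interpolation route is more elementary and fully explicit.

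\textbf{Normalization step.} Your estimate via $A(1)\ge 1-\eta'$ is slightly cleaner than the paper's. The paper detours through $m=\min_k s_k$ and ends with the bound $2\epsilon$ rather than $\epsilon$.
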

\begin{proof}
    See Appendix~\ref{app:proofofhomeo}.
\end{proof}
By adding a learnable real parameter \(b\), \Cref{0-1 to 0-1 homeo_NN} implies the following approximation theorems for arbitrary orientation-preserving homeomorphisms on \(\SS^1\).
\begin{corollary}\label{cor:HomeoNNShift}
    For each increasing function \(\Phi\colon[0,1]\to\RR\) with \(\Phi(1)-\Phi(0)=1\), and \(\epsilon>0\), there exists an MLP \(\rho\colon[0,1]\to\RR_{>0}\) as described in \Cref{0-1 to 0-1 homeo_NN} and \(b\in\RR\) such that \(\|\Psi - \Phi\|_{\infty}<\epsilon\) where \(\Psi(t)=({\int_{0}^{t} \rho(s) \,ds})/({\int_{0}^{1} \rho(s) \,ds})+b\).
\end{corollary}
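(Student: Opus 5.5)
The plan is to reduce \Cref{cor:HomeoNNShift} to \Cref{0-1 to 0-1 homeo_NN} by normalizing \(\Phi\) to a self-homeomorphism of \([0,1]\) and absorbing the offset into \(b\). Set \(b\coloneqq\Phi(0)\) and define \(\Phi_0(t)\coloneqq\Phi(t)-\Phi(0)\). Since \(\Phi(1)-\Phi(0)=1\), the function \(\Phi_0\) maps \([0,1]\) to \([0,1]\) with \(\Phi_0(0)=0\) and \(\Phi_0(1)=1\), and it is increasing.

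The one point to verify is that \(\Phi_0\) is a homeomorphism, i.e.\ continuous, strictly increasing and onto. In the intended setting, \(\Phi\) is the lift of an orientation-preserving homeomorphism of \(\SS^1=\RR/\ZZ\) described in \Cref{Sec:homeo}, so it is a continuous strictly increasing bijection of \([0,1]\) onto \([b,b+1]\). Then \(\Phi_0\) is a continuous strictly increasing bijection of \([0,1]\) onto itself. I will read ``increasing'' in this sense. If only a monotone (possibly non-strict or discontinuous) \(\Phi\) were intended, I would first approximate it in sup norm by a strictly increasing continuous function with the same endpoints. This works only when \(\Phi\) is continuous, since a sup-norm limit of continuous functions is continuous. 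So the continuity reading is necessary, and it is the main, though mild, obstacle here.

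Next, apply \Cref{0-1 to 0-1 homeo_NN} to \(\Phi_0\) with the given \(\epsilon\). This yields an MLP \(\rho\colon[0,1]\to\RR_{>0}\), with ReLU hidden layers and exponential output, such that \(\Psi_0(t)=\int_0^t\rho\,/\int_0^1\rho\) satisfies \(\|\Psi_0-\Phi_0\|_\infty<\epsilon\). Finally set \(\Psi\coloneqq\Psi_0+b\). Then
\[
\Psi(t)-\Phi(t)=\Psi_0(t)+\Phi(0)-\Phi_0(t)-\Phi(0)=\Psi_0(t)-\Phi_0(t),
\]
so \(\|\Psi-\Phi\|_\infty=\|\Psi_0-\Phi_0\|_\infty<\epsilon\), which is the claim.
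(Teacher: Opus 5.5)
Your proof is correct and is the argument the paper intends when it says the corollary follows from \Cref{0-1 to 0-1 homeo_NN} ``by adding a learnable real parameter \(b\)'': take \(b=\Phi(0)\), apply the theorem to \(\Phi-\Phi(0)\), and shift back. Your caveat is also right and worth keeping: every \(\Psi\) of this form is continuous, so the statement needs \(\Phi\) continuous (as it is when \(\Phi\) maps onto \([\Phi(0),\Phi(0)+1]\), e.g.\ a lift of an element of \(\Homeo^+(\SS^1)\)), and a continuous but only non-decreasing \(\Phi\) reduces to the theorem via your perturbation, e.g.\ \((1-\delta)\Phi_0+\delta t\).
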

\begin{corollary}\label{cor:HomeoNNS1}
    For any \(\tilde\Phi\in\Homeo^+(\SS^1)\) and \(\epsilon>0\), there exists an MLP \(\rho\) and \(b\in\RR\) representing \(\Psi\) as in  \Cref{cor:HomeoNNShift} so that \(\|e^{\ii 2\pi \Psi} - \tilde\Phi\|_{\infty}<\epsilon\).
\end{corollary}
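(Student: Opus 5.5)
Here \(\tilde\Phi\in\Homeo^+(\SS^1)\) is viewed as a map \(\SS^1\cong\RR/\ZZ\to\SS^1\subset\CC\). The plan is to lift \(\tilde\Phi\) to an increasing function on \([0,1]\) with total increase \(1\), invoke \Cref{cor:HomeoNNShift}, and then push the uniform estimate through the map \(x\mapsto e^{\ii 2\pi x}\).

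\emph{Step 1 (lifting).} Consider the map \(t\mapsto\tilde\Phi(e^{\ii 2\pi t})\) on \([0,1]\). It is a continuous loop in \(\SS^1\) traversed once in the positive direction. By the path-lifting property of the covering \(\RR\to\SS^1\), \(x\mapsto e^{\ii2\pi x}\), it has a continuous lift \(\Phi\colon[0,1]\to\RR\) with \(e^{\ii 2\pi\Phi(t)}=\tilde\Phi(e^{\ii 2\pi t})\).

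Because \(\tilde\Phi\) is a homeomorphism of \(\SS^1\), its degree is \(\pm1\); because it preserves orientation, the degree is \(+1\). Hence \(\Phi(1)-\Phi(0)=1\). Injectivity of \(\tilde\Phi\) on \([0,1)\) forces \(\Phi\) to be injective on \([0,1)\). A continuous injective function on an interval is strictly monotone, and \(\Phi(1)>\Phi(0)\) then makes it increasing on all of \([0,1]\). This is the one place needing care: orientation-preserving must be identified with degree \(+1\), i.e.\ with the lift increasing. With the paper's identification \(\SS^1\cong\RR/\ZZ\) this is essentially the definition.

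\emph{Step 2 (approximation).} Given \(\epsilon>0\), apply \Cref{cor:HomeoNNShift} to \(\Phi\) with tolerance \(\delta=\epsilon/(2\pi)\). This yields an MLP \(\rho\) and \(b\in\RR\) such that \(\|\Psi-\Phi\|_\infty<\delta\).

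\emph{Step 3 (pushing forward).} For real \(x,y\) we have \(|e^{\ii2\pi x}-e^{\ii2\pi y}|=2|\sin(\pi(x-y))|\le 2\pi|x-y|\). Therefore
\[
\sup_{t\in[0,1]}\bigl|e^{\ii2\pi\Psi(t)}-\tilde\Phi(e^{\ii2\pi t})\bigr|\le 2\pi\,\|\Psi-\Phi\|_\infty<2\pi\delta=\epsilon,
\]
which is the claimed bound \(\|e^{\ii 2\pi\Psi}-\tilde\Phi\|_\infty<\epsilon\). A strict inequality is retained since \([0,1]\) is compact and the sup is attained.

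One might worry that \(e^{\ii2\pi\Psi}\) is well defined on \(\SS^1\), but this is automatic. The bound is only ever used on \(t\in[0,1]\), and in any case \(\Psi(1)-\Psi(0)=1\), so \(e^{\ii2\pi\Psi}\) closes up as a map of \(\SS^1\).
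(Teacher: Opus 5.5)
Your proof is correct and follows the route the paper intends. The paper gives no separate argument beyond identifying \(\Homeo^+(\SS^1)\) with increasing lifts onto \([b,b+1]\) and appealing to \Cref{cor:HomeoNNShift}; your lifting and degree step, the tolerance \(\epsilon/(2\pi)\), and the bound \(|e^{\ii 2\pi x}-e^{\ii 2\pi y}|\le 2\pi|x-y|\) are exactly the details that remark leaves implicit. The appeal to compactness at the end is unnecessary, since strictness already follows from \(\|\Psi-\Phi\|_\infty<\delta\).
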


\begin{figure}
    \centering
    \setlength{\unitlength}{1pt}
    \begin{picture}(0.85\columnwidth,135)
        \put(0,-10){
        \includegraphics[width=0.8\columnwidth]{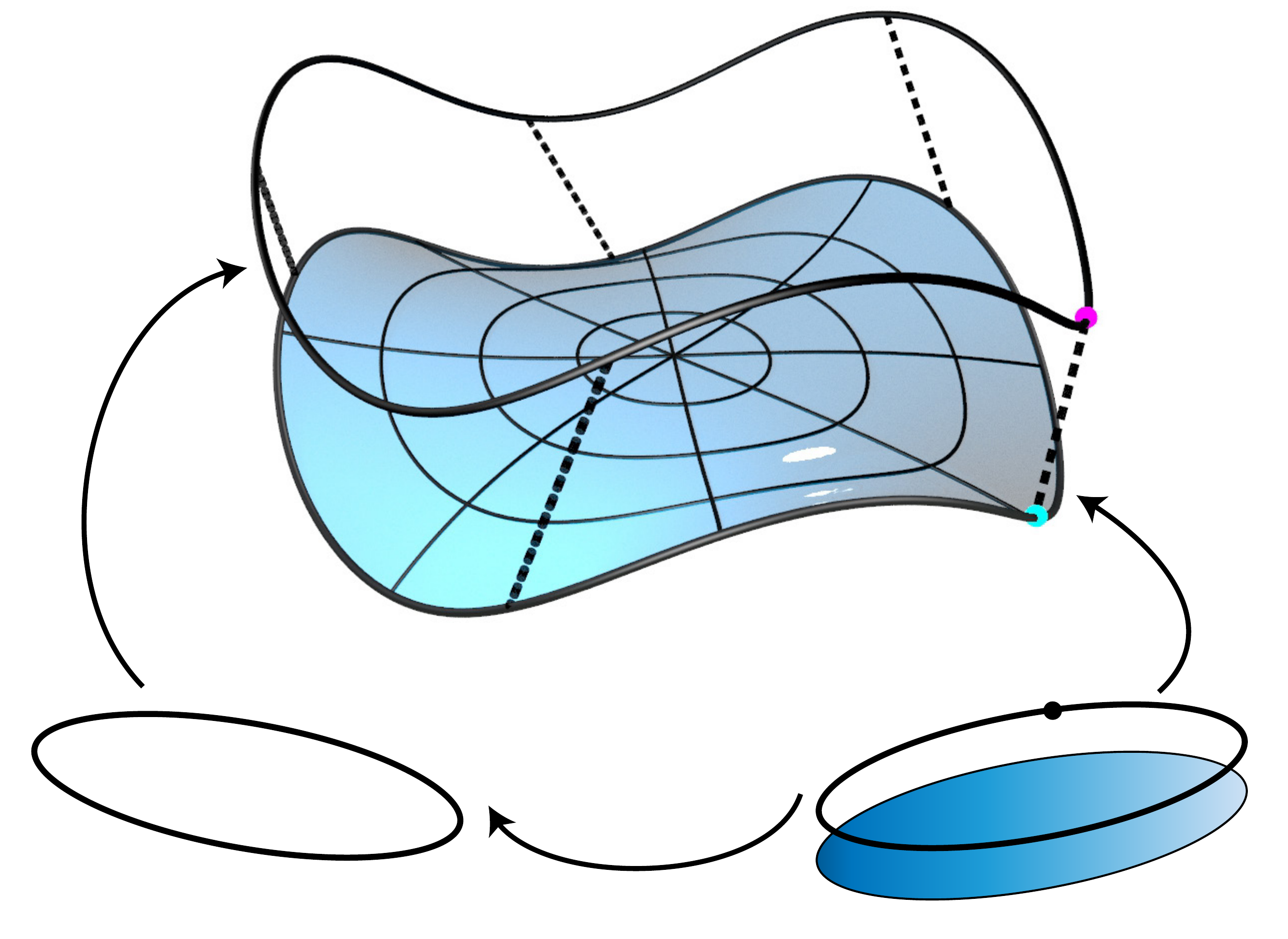}
            }
        \put(97,5){$\tilde{\Phi}$}
        \put(190,0){$\mathbb{D}$}
        \put(190,25){$\partial\mathbb{D}$}
        \put(183,47){$\br|_{\partial\mathbb{D}}$}
        \put(160,31){$p$}
        \put(20,60){$\Gamma$}
        \put(145,45){$\br|_{\partial\mathbb{D}}(p)$}
        \put(170,85){$\Gamma\circ\tilde{\Phi}(p)$}
        \put(30,-3.5){$\mathbb{S}^1$}
        
    \end{picture}
    \caption{Given the curve \(\Gamma\colon\SS^1\to\RR^3\), we seek a minimal surface \(\br\colon\DD\to\RR^3\) together with a homeomorphism \(\tilde\Phi\) between \(\partial\DD\) and \(\SS^1\) that minimizes the distance between the boundary map \(\br|_{\partial\DD}\) and \(\Gamma\circ\tilde\Phi\). 
    }
    \label{fig:reparametrization}
\end{figure}

\subsection{Numerical Algorithm}
\Cref{Sec:homeo} shows that homeomorphisms \(\SS^1\) can be represented by an MLP \(\rho\).
This allows the problem \eqref{eq:energy} to be formulated as an optimization for the parameters in the neural networks. Let \(w_{p}, w_{q}, w_{\rho}\) denote the parameters for the network representing functions \(p, q,\) and \(\rho\), respectively.
Let \(\cL( \br|_{\partial\DD},\Gamma\circ\tilde\Phi)\) denote the loss function that measures the fidelity of the two functions \(\br|_{\partial\DD},\Gamma\circ\tilde\Phi\).  For example \(\cL\) can be given by the integral in \eqref{eq:energy}, or some higher order norm detailed in \Cref{sec:min_surface_training}. In either case, \(\cL( \br|_{\partial\DD},\Gamma\circ\tilde\Phi)\) is a function of the neural network parameters \(w_p, w_q, w_\rho\).

To minimize \(\cL(\br|_{\partial\DD},\Gamma\circ\tilde\Phi)\), we alternate between (i) updating the surface network \(\br\) parameterized by \({w_p}\) and \({w_q}\), and (ii) the homeomorphism module \(\tilde\Phi\) with parameters \(w_\rho\) (\Cref{overall_alg}).

However, note that if the given curve \(\Gamma\) is not parameterized differentiably, and \(\tilde\Phi\) is represented by the MLP \(\rho\), then the loss function \(\cL(\br|_{\partial\DD},\Gamma\circ\tilde\Phi)\) may not be differentiable with respect to \(w_\rho\) during step (ii).

To overcome this, in step (ii), we minimize \(\tilde\cL\coloneqq \cL(\br|_{\partial\DD}\circ\tilde\Phi^{-1},\Gamma)\) instead.  In practice, it is the homeomorphism \(\tilde\Phi^{-1}\) that is parameterized by \(\rho\) using the method of \Cref{Sec:homeo}.  This ensures that in step (ii), \(\tilde\cL\) is differentiable with respect to \(w_\rho\) (note that \(\br|_{\partial\DD}\) is differentiable).  In step (i), the inverse \(\tilde\Phi\) of \(\tilde\Phi^{-1}\) is evaluated via a bisection method.  Note that in step (i), only the parameters for \(\br\) in \(\cL(\br|_{\partial\DD},\Gamma\circ\tilde\Phi)\) needs to be differentiated, and \(\Gamma\circ\tilde\Phi\) can remain non-differentiable.

If the given \(\Gamma\) is differentiable, then we simply set \(\tilde\cL\) as the original objective \(\tilde\cL = \cL(\br|_{\partial\DD},\Gamma\circ\tilde\Phi)\) and parameterize \(\tilde\Phi\) by the MLP \(\rho\).


\begin{algorithm}
\caption{Neural Optimization of Minimal Surfaces}
\begin{algorithmic}[1]
    \Require Boundary curve $\Gamma\colon\SS^1\to\RR^3$
    \Ensure Minimal surface $\br\colon\DD\to\RR^3$
    
    \State Initialize neural networks $p_{w_p},q_{w_q}$ (holomorphic functions) and $\rho_{w_{\rho}}$ (density function for boundary homeomorphism \(\tilde\Phi^{-1}\))
    
    \While{not converged}
        \State \textbf{Surface Update}
        \State \hspace{\algorithmicindent} Sample points $\{z_i\}$ on the unit circle $\mathbb{S}^1$;
        \State \hspace{\algorithmicindent} Evaluate $\Gamma \circ \tilde{\Phi}(z_i)$ and $\br(z_i)$;
        \State \hspace{\algorithmicindent} Evaluate $\cL$ and update $w_p,w_q$ using RMSprop;
        \State \textbf{Boundary Homeomorphism Update}
        \State \hspace{\algorithmicindent} Sample points $\{z_i\}$ on the unit circle $\mathbb{S}^1$;
        \State \hspace{\algorithmicindent}
        Evaluate \(\br(\tilde\Phi^{-1}(z_i))\) and \(\Gamma(z_i)\);
        \State \hspace{\algorithmicindent} 
        {Evaluate  $\tilde\cL$ and update $w_\rho$ using RMSprop};
    \EndWhile
    \State \Return minimal surface $\br$
\end{algorithmic}
\label{overall_alg}
\end{algorithm}

\subsubsection{The Loss Function}
\label{sec:min_surface_training}


The loss function \(\cL(\bx,\by)\) for functions \(\bx,\by\colon\SS^1\to\RR^3\) is designed to be the fidelity energy in \eqref{eq:energy} potentially with the addition of a higher order term.  Specifically, let \(\cL_{\rm distance}(\bx,\by) \coloneqq \int_0^1|\bx(e^{2\pi\ii t}) - \by(e^{2\pi\ii t})|^2\, dt\).  For the higher order term, let \(\bv(t)\coloneqq {d\over dt}\bx(e^{2\pi\ii t})\) and \(\bw(t)\coloneqq {d\over dt}\by(e^{2\pi\ii t})\), and define 
\(\cL_{\rm tangent}(\bx,\by) = \int_0^1 (1-{\bv\over |\bv|}\cdot{\bw\over|\bw|} )\, dt\) which measures the fidelity of the curves' tangents.
We let \(\cL(\bx,\by)\coloneqq\cL_{\rm distance}(\bx,\by) + \lambda\cL_{\rm tangent}(\bx,\by)\), where \(\lambda \geq 0\) is a nonnegative weight. \Cref{fig:vel_comparison} shows that a nonzero $\lambda$ improves the optimization.

The integrals over \(t\in[0,1]\) in \(\cL(\bx,\by)\) are evaluated via Monte Carlo integration with random sampling on \([0,1]\).  We optimize \(\cL\) using the RMSprop optimizer.

\section{Results}
\label{sec:results}

We demonstrate how our method approaches the Plateau problem through a series of numerical experiments.
All experiments share the same hyperparameters. The networks $p$ and $q$ are modeled as MLPs with $10$ layers and width $20$ (input/output dimension 1). The reparameterization network $\rho$
uses an MLP with architecture $[1,64,64,32,1]$. We use a learning rate of $10^{-5}$, a batch size of $128$, $1024$ random points for training generated at each epoch, a coefficient $\lambda=0.5$, and a number of trapezoidal integration steps $N=100$. (The code is included in the supplementary material.) 

\subsection{Comparison with the Weierstrass--Enneper Para\-me\-terization} 
We compare our method with computing minimal surfaces using the classical Weierstrass-Enneper Parameterization \eqref{eq:WE}, where $f$ and $g$ are parameterized by two CVNNs with the same architecture as $p$ and $q$ in our method.

\begin{figure}
    \centering
    
    \begin{subfigure}[t]{0.48\linewidth}
        \centering
        \includegraphics[width=\linewidth]{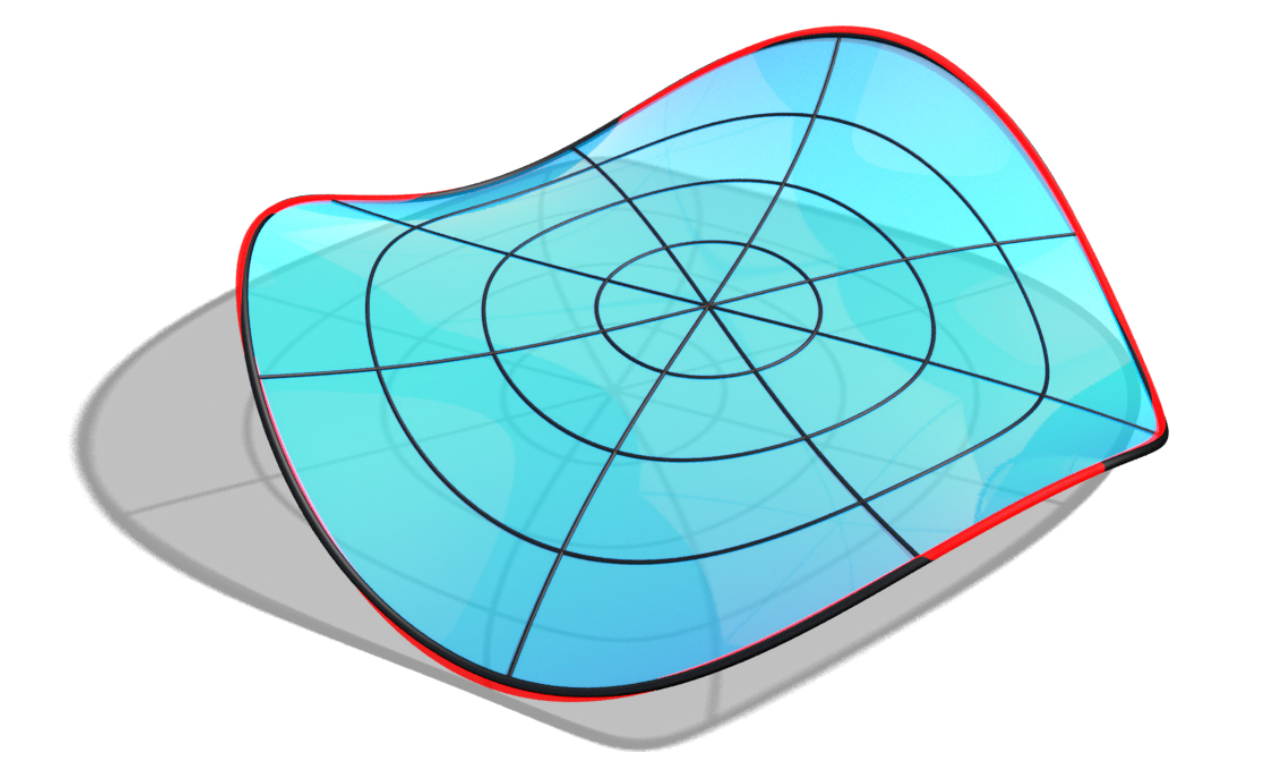}
        \caption{\textbf{Our method}}
        \label{fig:compare_ours}
    \end{subfigure}
    \hfill
    \begin{subfigure}[t]{0.48\linewidth}
        \centering
        \includegraphics[width=\linewidth]{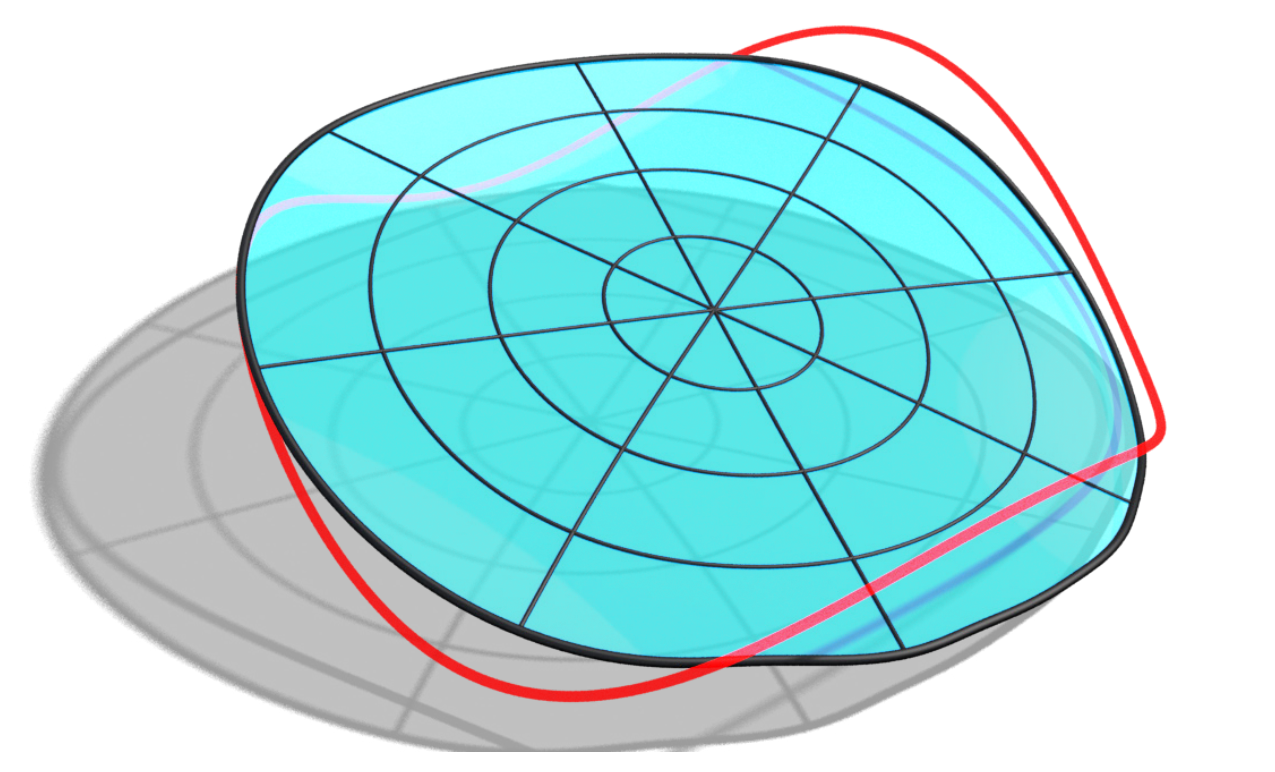}
        \caption{Weierstrass--Enneper}
        \label{fig:compare_WE}
    \end{subfigure}
    
    \caption{Comparison of Plateau problem optimization using (a) our method and (b) a neural Weierstrass--Enneper parameterization. The prescribed boundary curve, shown in red, is a sinusoidally perturbed circle.}
    \label{fig:comparison}
\end{figure}

\Cref{fig:comparison} shows the comparison of the two methods with the boundary curve given by a sinusoidally perturbed circle:
\begin{equation} \label{twisted_circ_eq}
\Gamma : [0, 2\pi] \to \mathbb{R}^3, \quad 
t \mapsto \bigl( R \cos t,\; R \sin t,\; \alpha \sin(k t) \bigr)
\end{equation}
where $R=1$ is the radius, $k=3.0$ controls the oscillation frequency, and $\alpha=0.2$ determines the vertical amplitude. 
The method based on the Weierstrass--Enneper parameterization fails to find the minimal surface bounded by the curve, since its Gauss map contains directions that induce pole singularities in the function \(g\).  Our method based on Euclid's spinor representation is able to produce a minimal surface whose boundary is close to the given curve.

\Cref{fig:comparison2} shows a similar comparison with \(\Gamma\) being a tennis-seam curve, given by the intersection of the unit sphere with a cylinder.

All experiments are trained for 800 epochs.

\begin{figure}
    \centering
    
    \begin{subfigure}[t]{0.48\linewidth}
        \centering
        \includegraphics[width=0.8\linewidth, trim=100 0 100 0,
            clip]{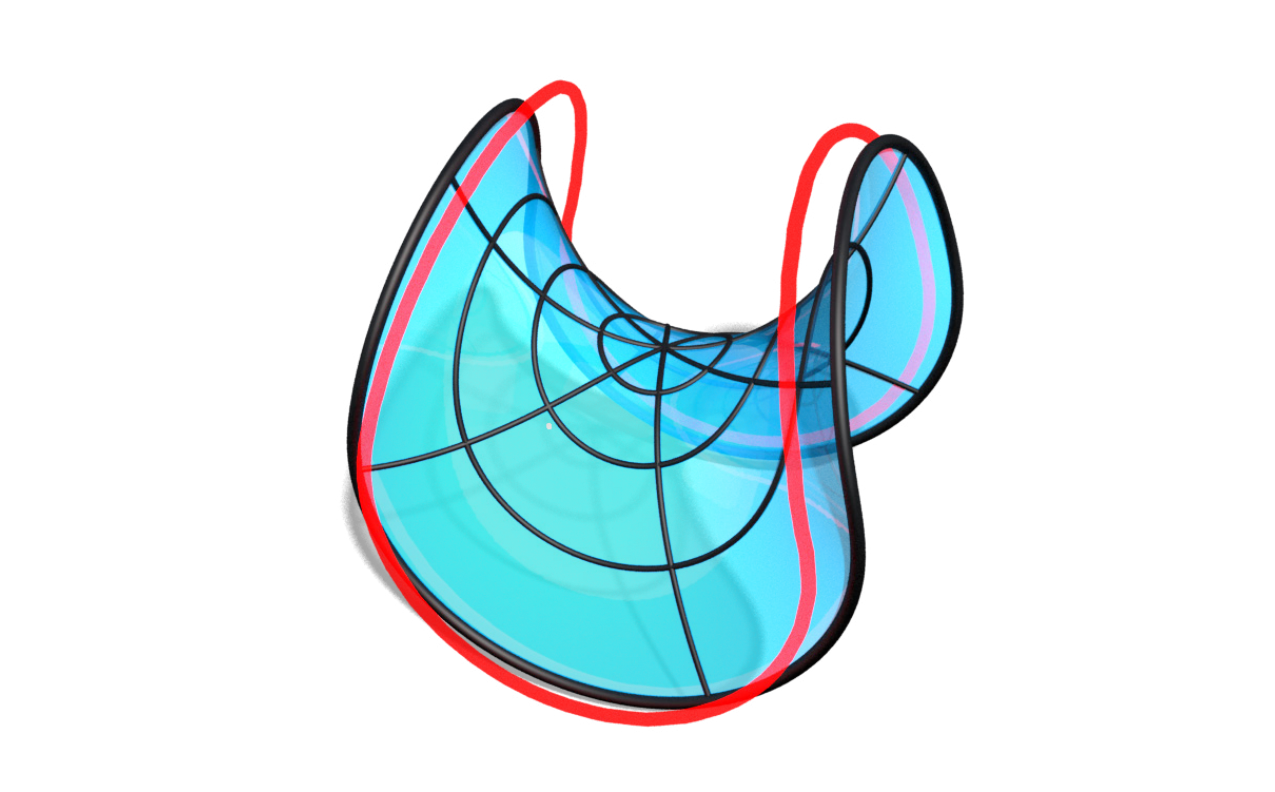}
        \caption{\textbf{Our method}}
        \label{fig:compare_ours}
    \end{subfigure}
    \hfill
    \begin{subfigure}[t]{0.48\linewidth}
        \centering
        \includegraphics[width=0.8\linewidth, trim=100 0 100 0,
            clip]{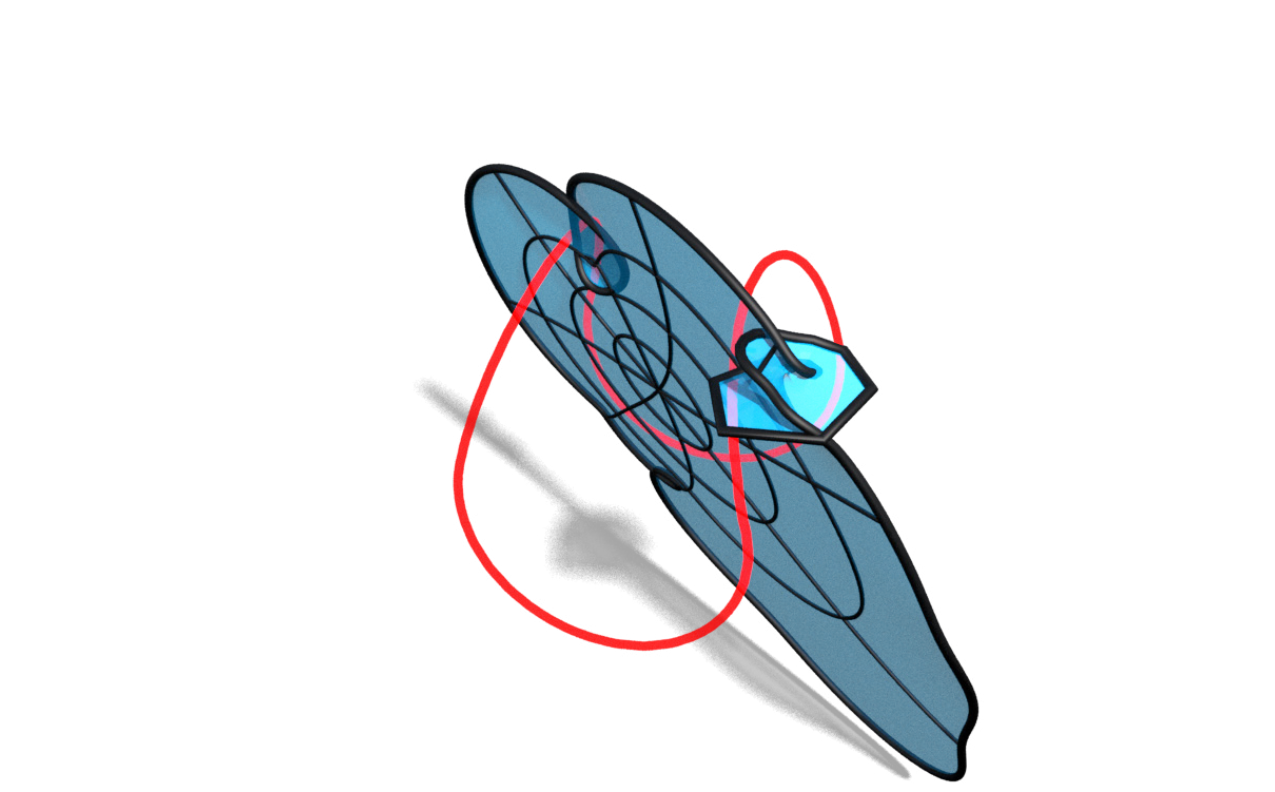}
        \caption{Weierstrass-Enneper}
        \label{fig:compare_WE}
    \end{subfigure}
    
    \caption{
    Comparison of Plateau problem optimization using (a) our method and (b) a neural Weierstrass--Enneper parameterization. The prescribed boundary curve is a tennis-seam curve.}
    \label{fig:comparison2}
\end{figure}

\subsection{Validation with Prescribed Minimal Surfaces}

For a fixed boundary \(\Gamma\), the solution to the Plateau problem is generically locally unique (up to M\"obius reparameterization on \(\DD\)).%
\footnote{The second variation of the area functional is given by the operator \(\Delta + |A|^2\) where \(\Delta\) is the negative definite Laplacian, which is generically non-degenerate.}
Thus, by taking \(\Gamma\) as the boundary of a prescribing a minimal surface (through prescribing functions $p$ and $q$), we should able to recover the the same minimal surface (\ie\@ the $p,q$ data up to M\"obius reparameterizations). 



\Cref{fig:comparison_with_prescribed_surface} shows this experiment on two different prescribed minimal surfaces. The surface of \Cref{fig:comparison_with_prescribed_surface}~(left) is obtained with neural $p$ and $q$ with randomized parameters. To make the \(p,q\) data more generic outside of the neural represented space, the $p,q$ data for \Cref{fig:comparison_with_prescribed_surface}~(right) is obtained by extracting the 10th order Taylor expansions of the $p$ and $q$ from
that of \Cref{fig:comparison_with_prescribed_surface}~(left), and randomly perturbing the
coefficients.  In both cases, our optimization method are able to closely recover the prescribed surface with matching \(p,q\) data and Weierstrass--Enneper data \(f,g\) of the surface up to M\"obius transformations (nearly just rotations in both cases) (Figures~\ref{fig:pq_random} and \ref{fig:pq_perturbed}).  

\begin{figure}
    \begin{subfigure}{0.49\linewidth}
        \centering
        \includegraphics[
            width=\linewidth,
            trim=0 0 50 0,
            clip
        ]{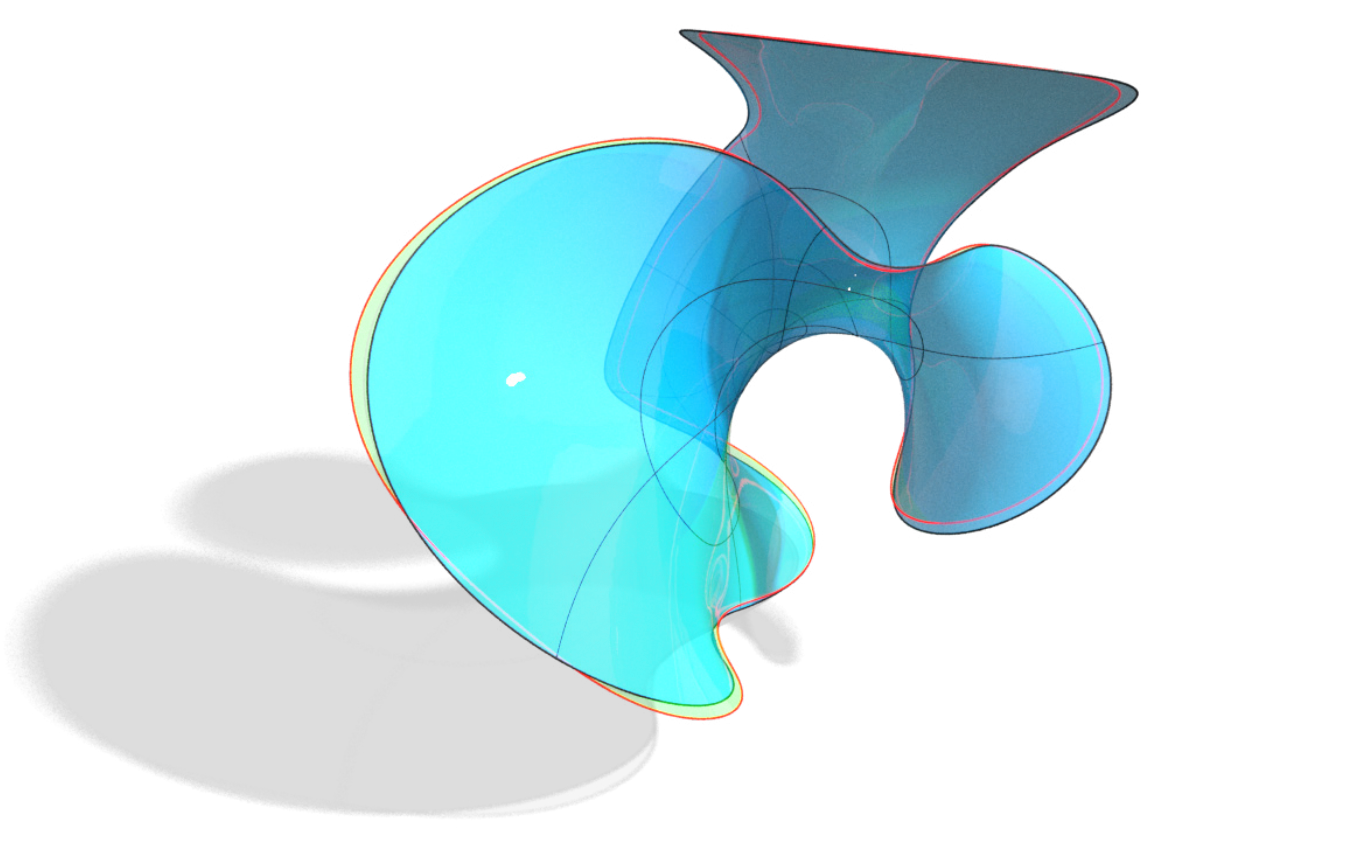}
        
        \label{fig:random_surface}
    \end{subfigure}
    \hfill
    \begin{subfigure}{0.49\linewidth}
        \centering
        \includegraphics[
            width=\linewidth,
            trim=0 0 30 0,
            clip
        ]{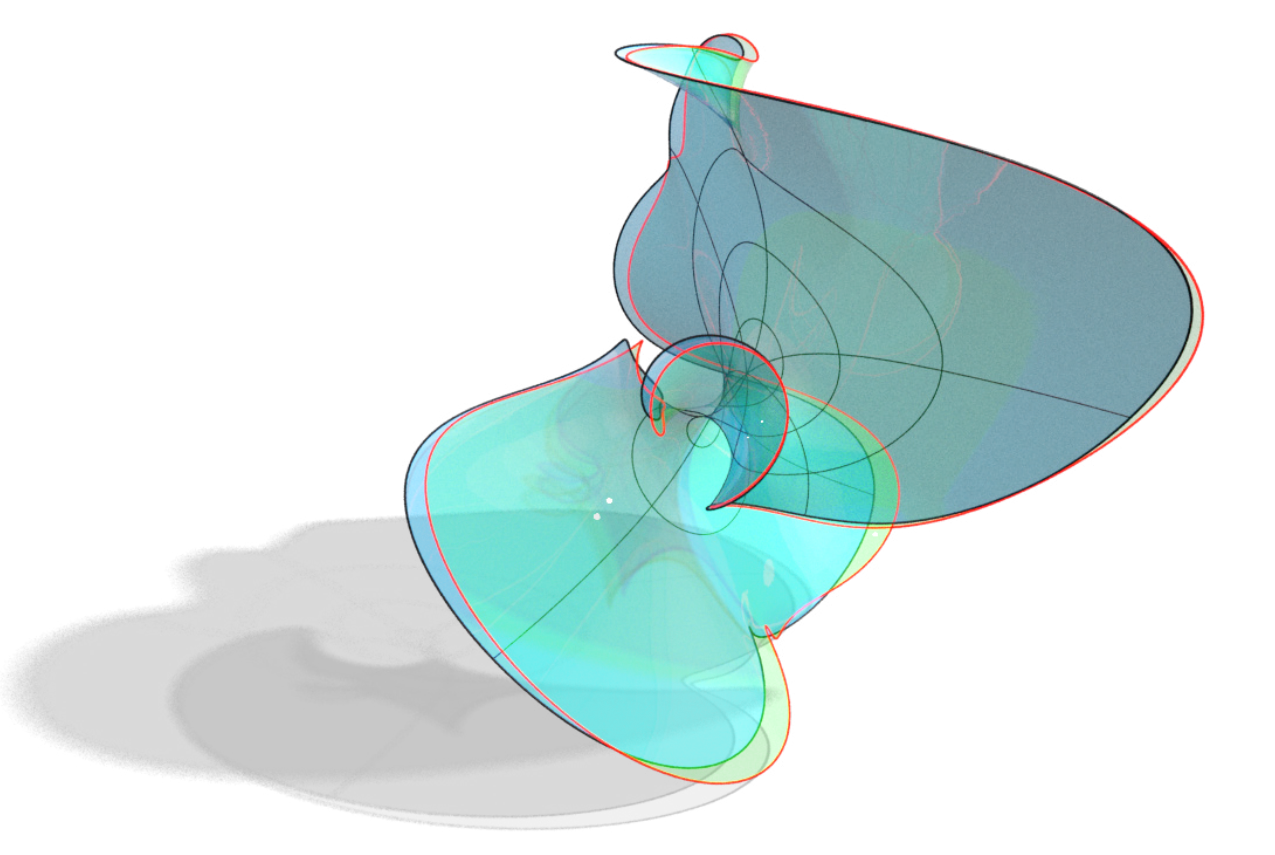}
        
        \label{fig:perturbed_surface}
    \end{subfigure}
    \caption{Two comparisons of the trained surface (blue interior, black boundary) and the prescribed ground truth (green interior, red boundary). The prescribed ground truth is synthesized by: a neural \(p\), \(q\) with randomized parameters (left); perturbing the Taylor expansion of the \(p,q\) data of the left example (right). The models were trained for 2000 epochs (left) and 4000 epochs (right).}
    \label{fig:comparison_with_prescribed_surface}
\end{figure}

\begin{figure}
    \centering
    \setlength{\unitlength}{1pt}
    \begin{picture}(\columnwidth,120)
    \put(0,0){
            \includegraphics[width=\columnwidth]{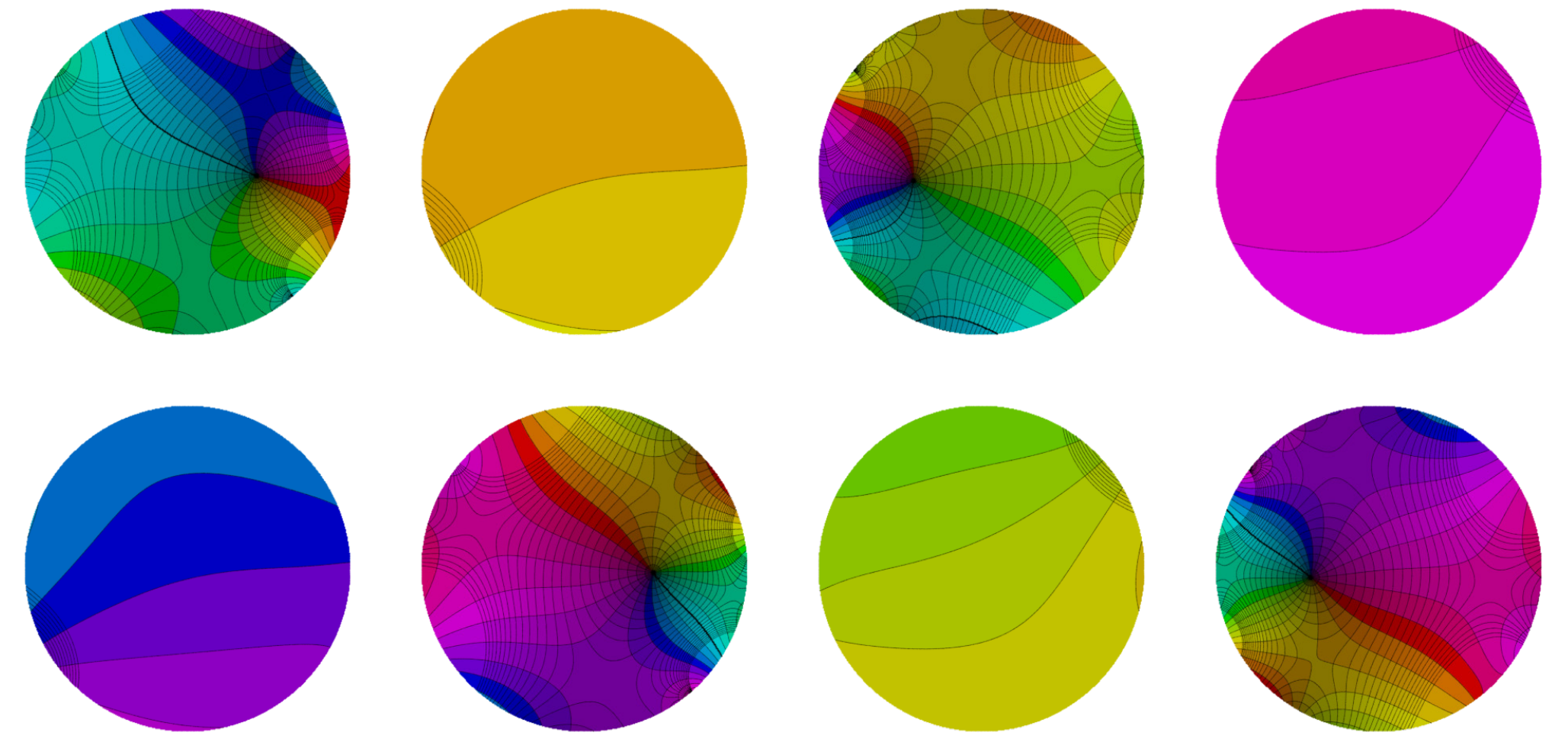}
            }
    \put(57,0){\small\sffamily(i)}
    \put(180,0){\small\sffamily(ii)}
    \put(30,55){$f$}
    \put(30,117){$p$}
    \put(91,117){$q$}
    \put(91,55){$g$}
    \put(153,55){$f$}
    \put(213,55){$g$}
    \put(153,117){$p$}
    \put(213,117){$q$}

    \end{picture}
    \caption{Corresponding functions p, q, f, and g for (i) the trained surface, and (ii) the ground truth of the surface in \Cref{fig:comparison_with_prescribed_surface}~(left).}
    \label{fig:pq_random}
\end{figure}

\begin{figure}[h]
    \centering
    \setlength{\unitlength}{1pt}
      \begin{picture}(\columnwidth,120)
        \put(0,0){
                \includegraphics[width=\columnwidth]{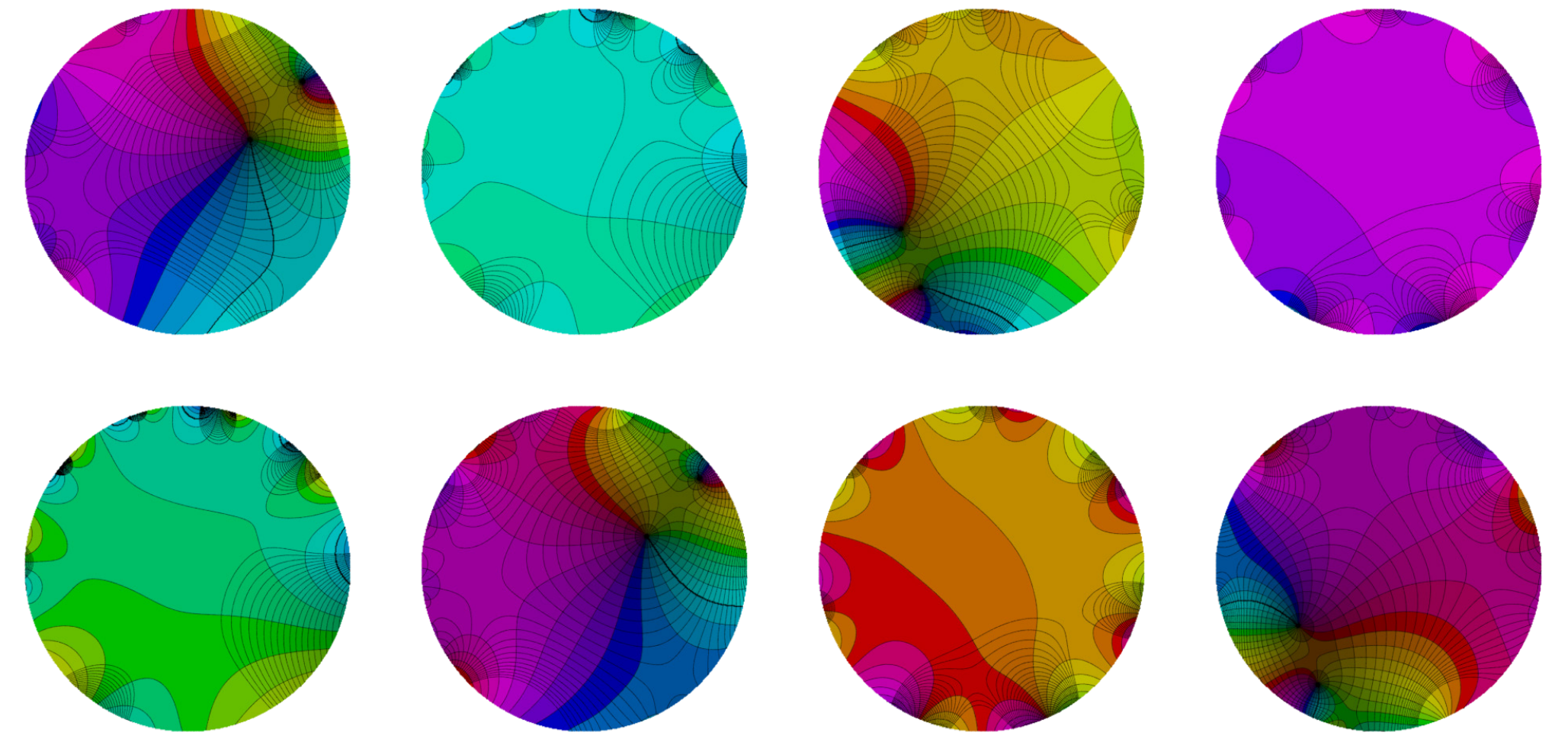}
                }
        \put(57,0){\small\sffamily(i)}
        \put(180,0){\small\sffamily(ii)}
        \put(30,55){$f$}
        \put(30,117){$p$}
        \put(91,117){$q$}
        \put(91,55){$g$}
        \put(153,55){$f$}
        \put(213,55){$g$}
        \put(153,117){$p$}
        \put(213,117){$q$}

      \end{picture}

    \caption{Corresponding functions p, q, f, and g for (i) the trained surface, and (ii) the ground truth of the surface in \Cref{fig:comparison_with_prescribed_surface}~(right).}
    
    \label{fig:pq_perturbed}
\end{figure}

\subsection{Surface Minimality Evaluation via Mean Curvature}

For each of our neural minimal surface, we compute its mean curvature through auto differentiation.    \Cref{fig:mean_curvature_heatmaps} show nearly vanishing mean curvature across the surface, indicating that these surfaces approximate minimal surfaces.

Note that the only error in the vanishing mean curvature condition arises in the trapezoidal quadrature, not from the training process or evaluation of the loss function. \Cref{fig:Meancurvature_vs_intsteps} shows that \(\|H\|_{\max}\) converges to zero with \(O({1\over N^2})\) with increasing numbers $N$ of quadrature partitions.




\section{Conclusion}

We presented a neural representation for constructing minimal surfaces.  Unlike approaches that optimizes to approximate the minimal surface equations, our representation intrinsically guarantees the minimality of the generated surface. 
By optimizing for boundary consistency, we demonstrate its ability to solve the Plateau problem. Experimental results show that the surfaces closely match the prescribed boundaries, and the zero mean curvature condition is achieved with second-order accuracy. In addition to the surface itself, the method also produces the associated Weierstrass data \((\phi_1,\phi_2,\phi_3)\), which is the analytic fingerprint of the minimal surface.

Our current method nevertheless has several limitations. First, it does not guarantee exact boundary matching. Second, the formulation currently assumes the domain is the unit disk, which restricts the method to minimal surfaces of disk topology and prevents the direct generation of surfaces with arbitrary topology.

Despite these limitations, the results demonstrate that analytic representations can be effectively integrated with modern neural networks to produce geometrically faithful surface models. 
More broadly, this work suggests a promising direction for combining exact structures of the problem with neural representations, enabling learning-based methods that preserve intrinsic geometric properties by construction.


\bibliographystyle{ACM-Reference-Format}
\bibliography{Reference}  

\newpage

\newpage

\begin{figure}[t]
    \centering
    \includegraphics[width=\linewidth]{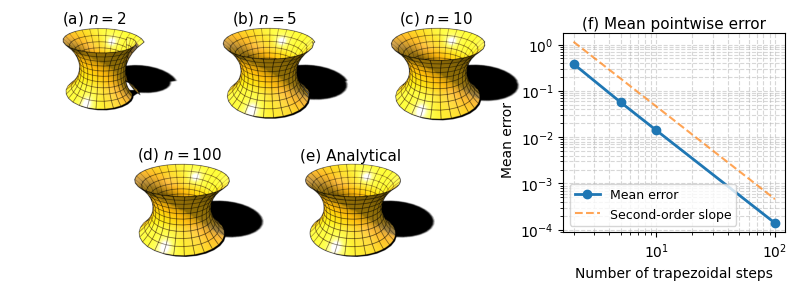}
   \caption{Comparison of trapezoidal integration with varying numbers of quadrature steps against the analytical solution of the catenoid. The parameter domain is $[-1,1]\times[0,2\pi]$. The error is measured as the pointwise distance between corresponding surface points sampled on a $120\times30$ grid.}
    \label{fig:compare_int_N_analytic}
\end{figure}

\begin{figure}
    \centering
    \begin{subfigure}{0.48\linewidth}
        \centering
        \includegraphics[width=\linewidth]{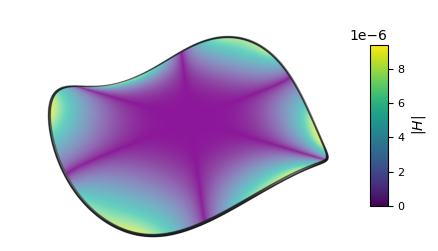}
        \label{fig:twsd_circ_mean_curvature}
    \end{subfigure}
    \hfill
    \begin{subfigure}{0.48\linewidth}
        \centering
        \includegraphics[width=\linewidth]{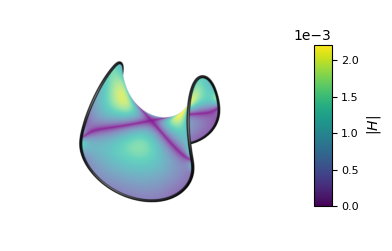}
        \label{fig:seam_curve_mean_curvature}
    \end{subfigure}

    \vspace{0.5em}

    \begin{subfigure}{0.48\linewidth}
        \centering
        \includegraphics[width=\linewidth]{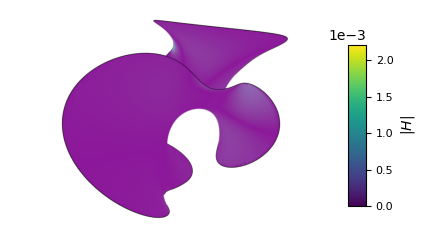}
        \label{fig:rand_surface_mean_curvature}
    \end{subfigure}
    \hfill
    \begin{subfigure}{0.48\linewidth}
        \centering
        \includegraphics[width=\linewidth]{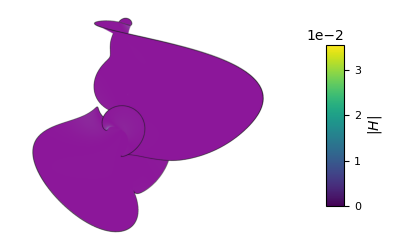}
        \label{fig:perturbed_surface_mean_curvature}
    \end{subfigure}

    \caption{Visualization of the absolute mean curvature $|H|$ on reconstructed surfaces for four different boundary conditions. The uniformly near-zero mean curvature values across the surfaces indicate that the reconstructed geometries closely approximate minimal surfaces.}
    \label{fig:mean_curvature_heatmaps}
\end{figure}

\begin{figure}
    \centering
    \setlength{\tabcolsep}{1pt}
    \renewcommand{\arraystretch}{1}
    \resizebox{0.85\linewidth}{!}{
    \begin{tabular}{@{}r c c c c c c@{}}
        & \scriptsize epoch 0 & \scriptsize epoch 2 & \scriptsize epoch 4
        & \scriptsize epoch 6 & \scriptsize epoch 8 & \scriptsize epoch 10 \\[-0.2em]

        \makebox[0pt][r]{\raisebox{16pt}{\scriptsize $\lambda=0$}} &
        \includegraphics[width=0.125\linewidth, trim=140 0 140 0, clip]{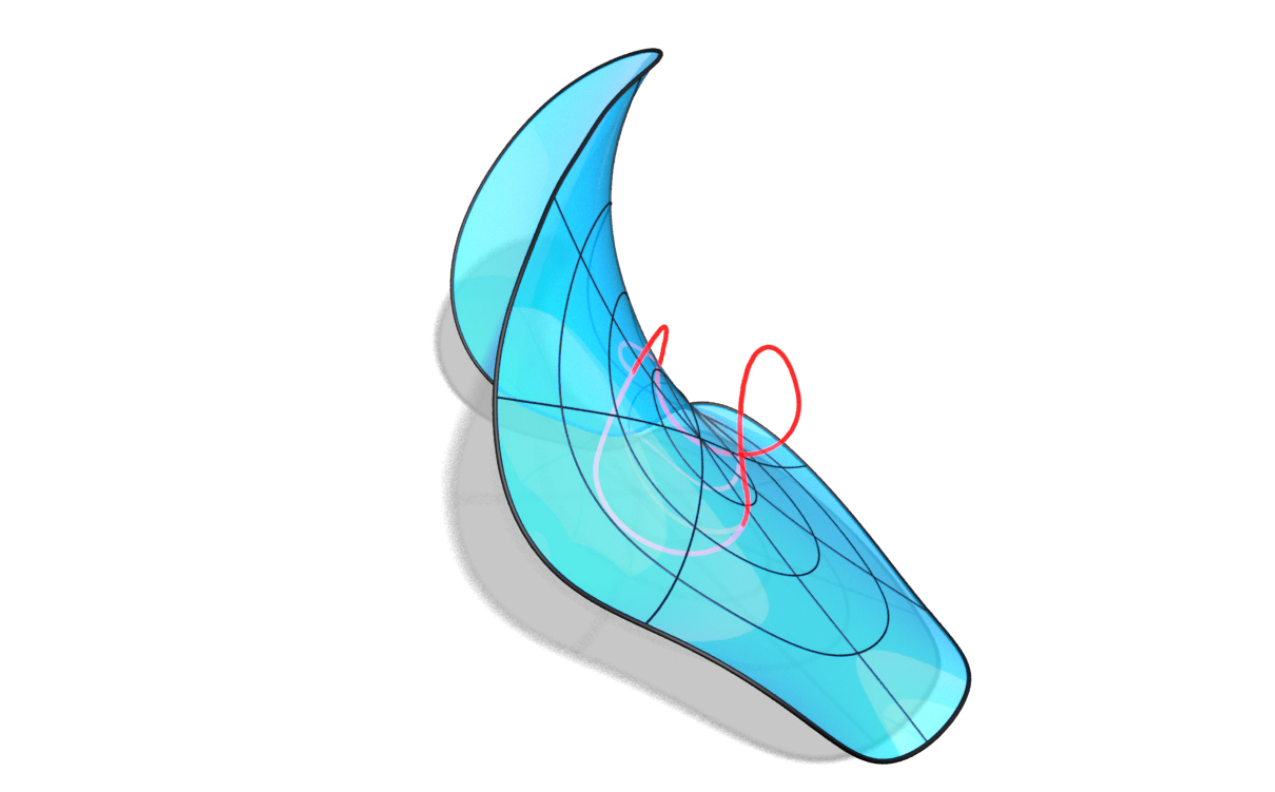} &
        \includegraphics[width=0.125\linewidth, trim=140 0 140 0, clip]{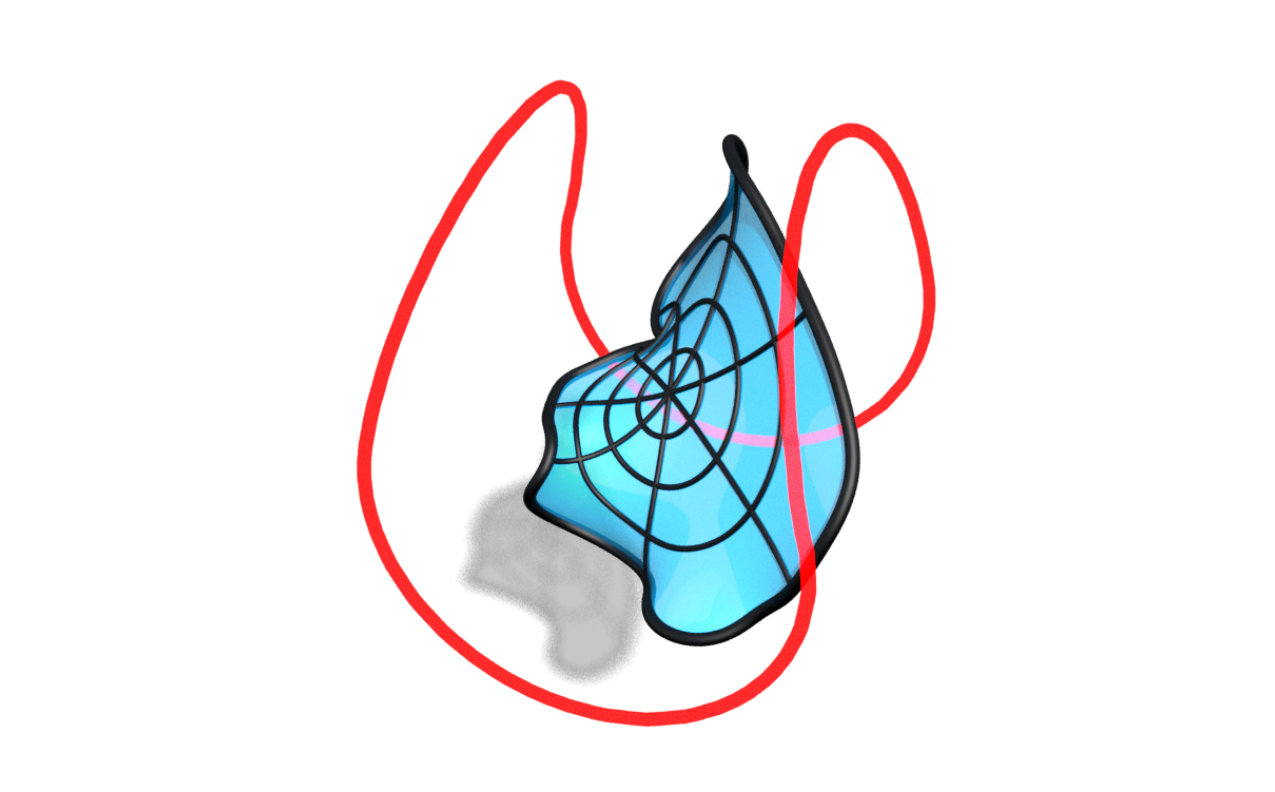} &
        \includegraphics[width=0.125\linewidth, trim=140 0 140 0, clip]{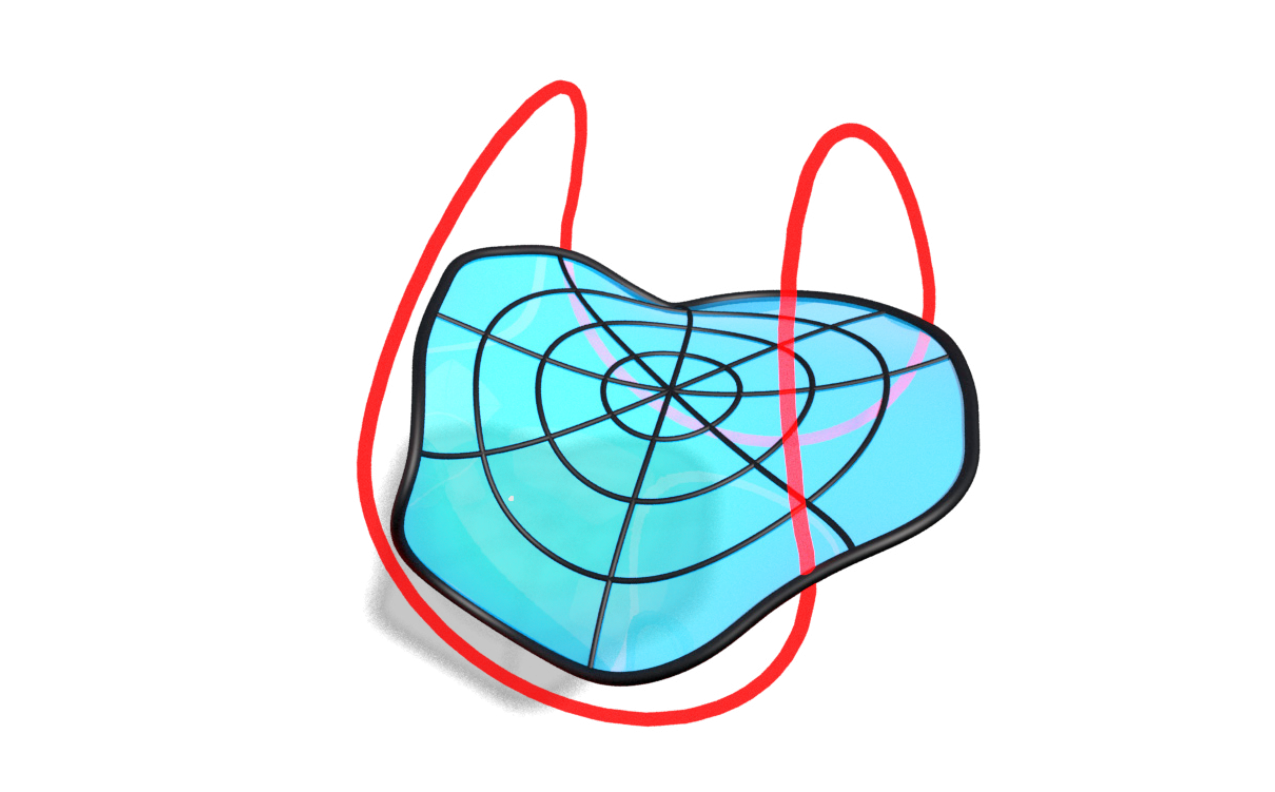} &
        \includegraphics[width=0.125\linewidth, trim=140 0 140 0, clip]{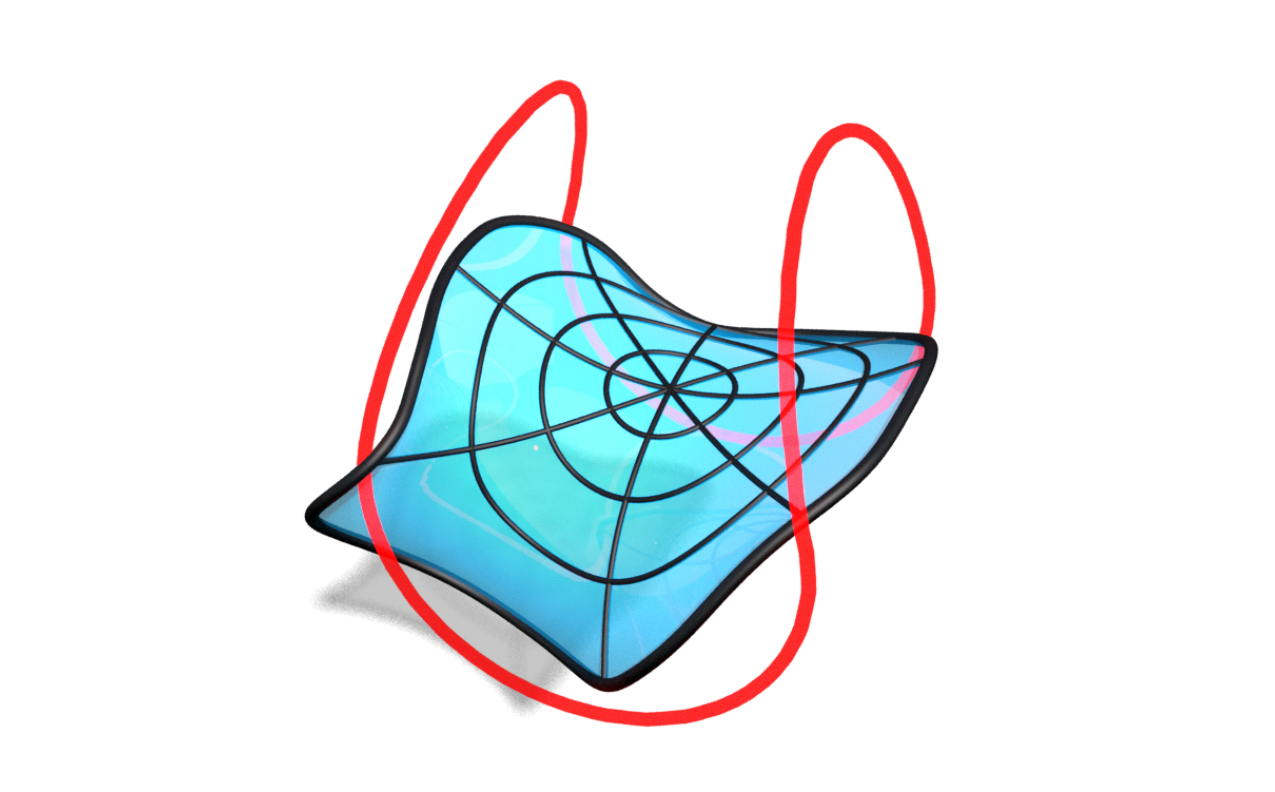} &
        \includegraphics[width=0.125\linewidth, trim=140 0 140 0, clip]{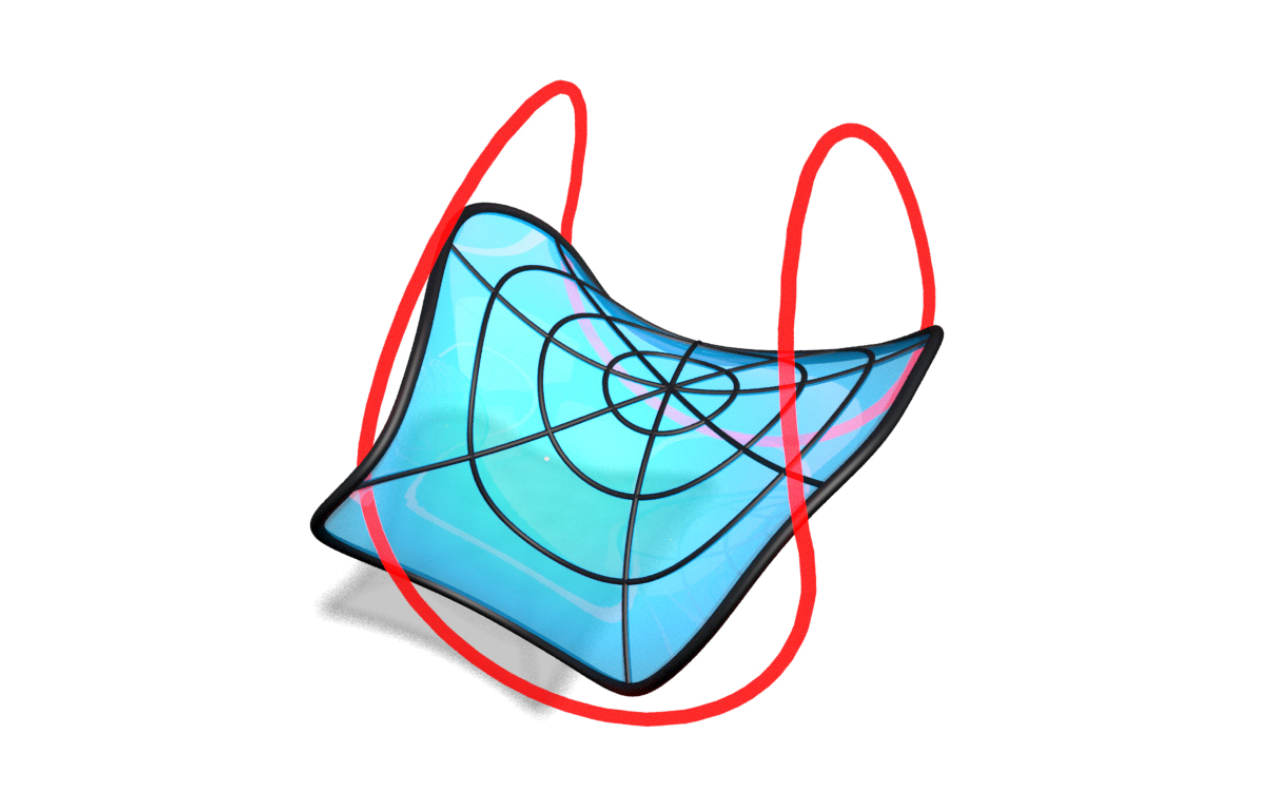} &
        \includegraphics[width=0.125\linewidth, trim=140 0 140 0, clip]{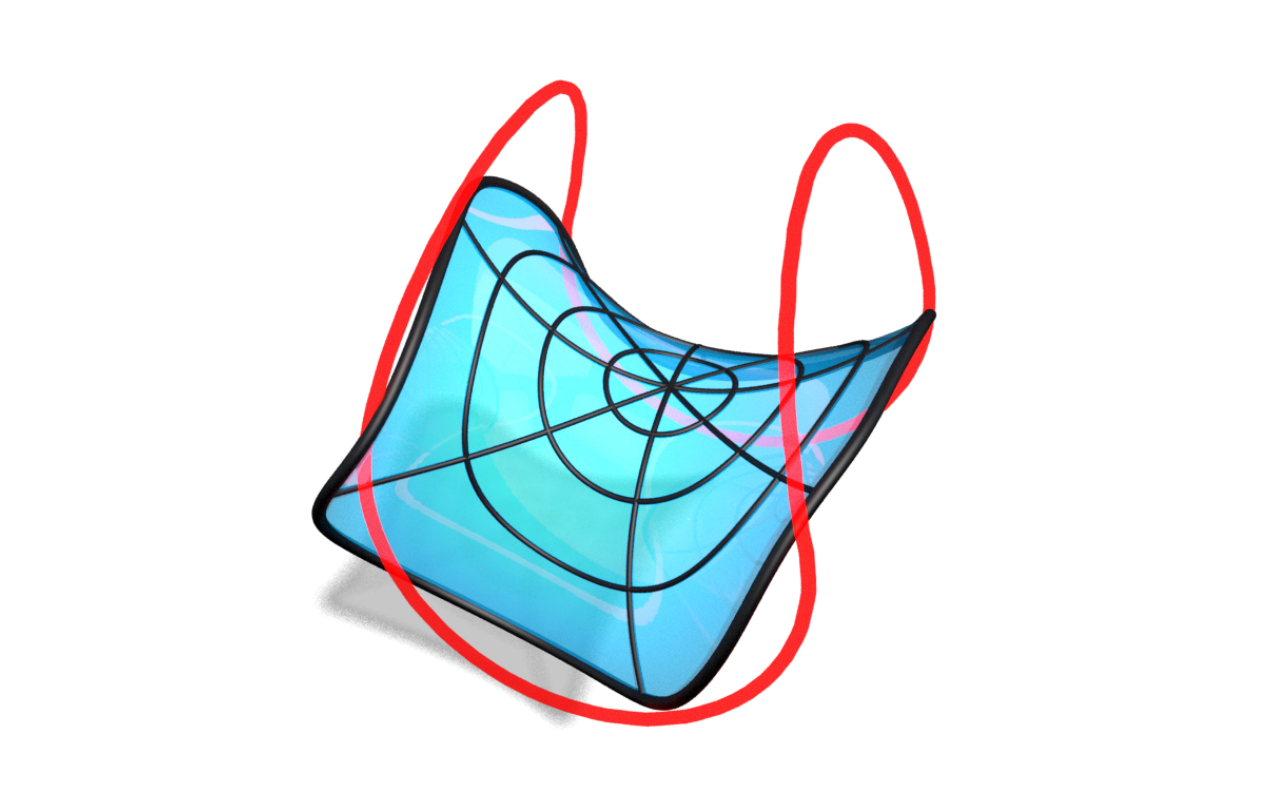} \\[0.3em]

        \makebox[0pt][r]{\raisebox{16pt}{\scriptsize $\lambda=0.5$}} &
        \includegraphics[width=0.125\linewidth, trim=140 0 140 0, clip]{images/0.pdf} &
        \includegraphics[width=0.125\linewidth, trim=140 0 140 0, clip]{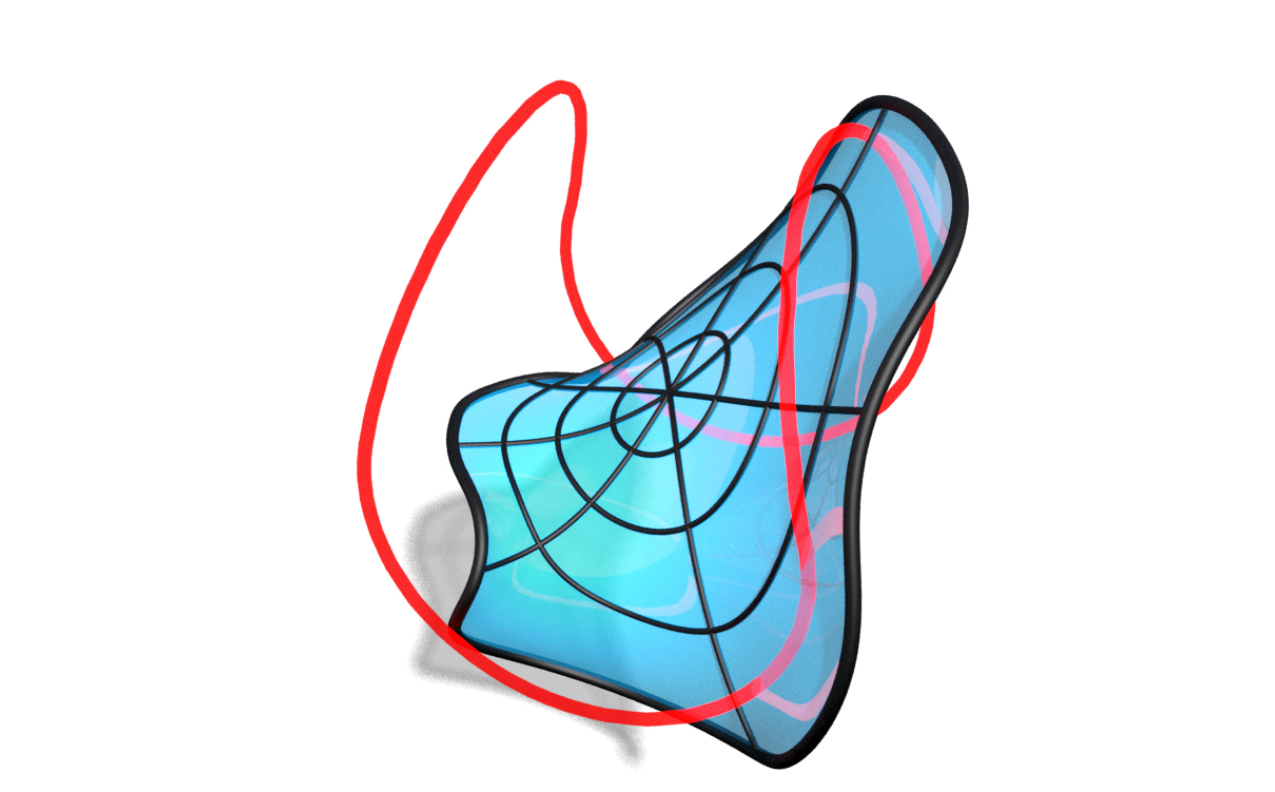} &
        \includegraphics[width=0.125\linewidth, trim=140 0 140 0, clip]{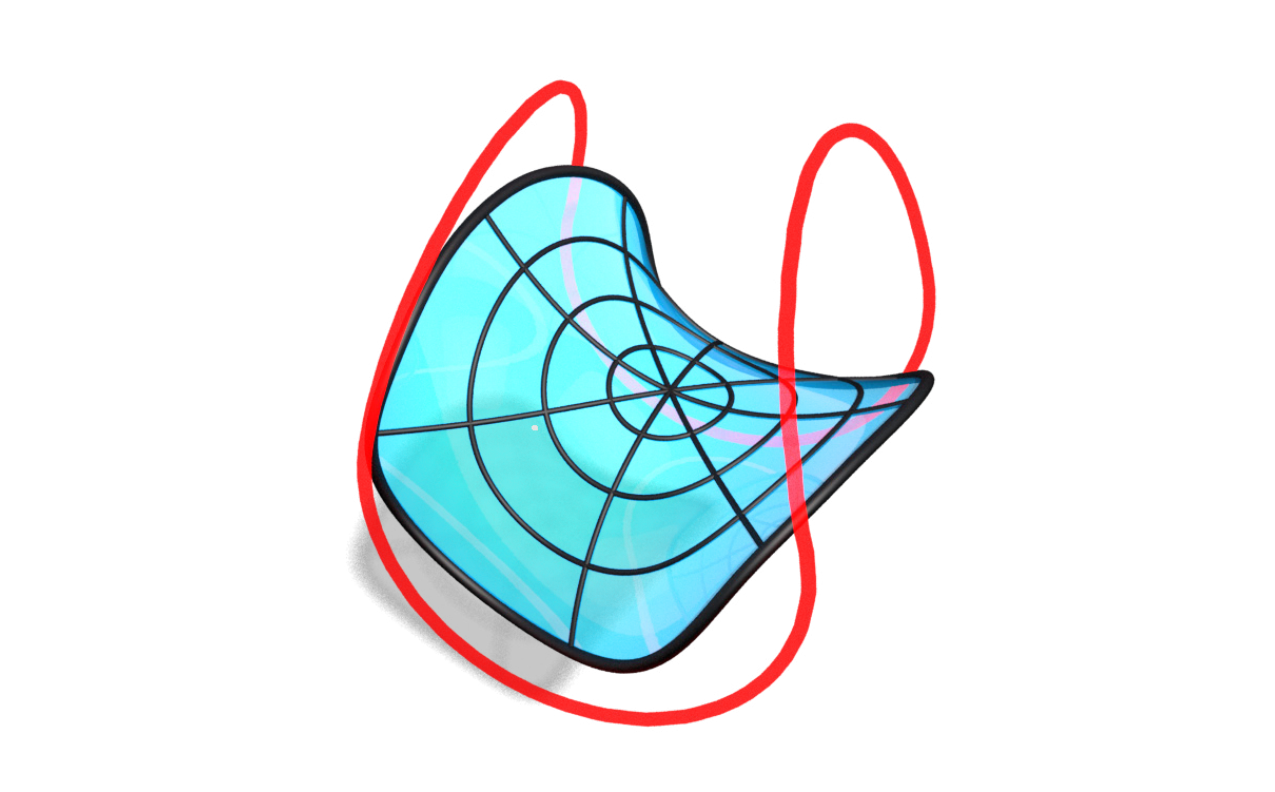} &
        \includegraphics[width=0.125\linewidth, trim=140 0 140 0, clip]{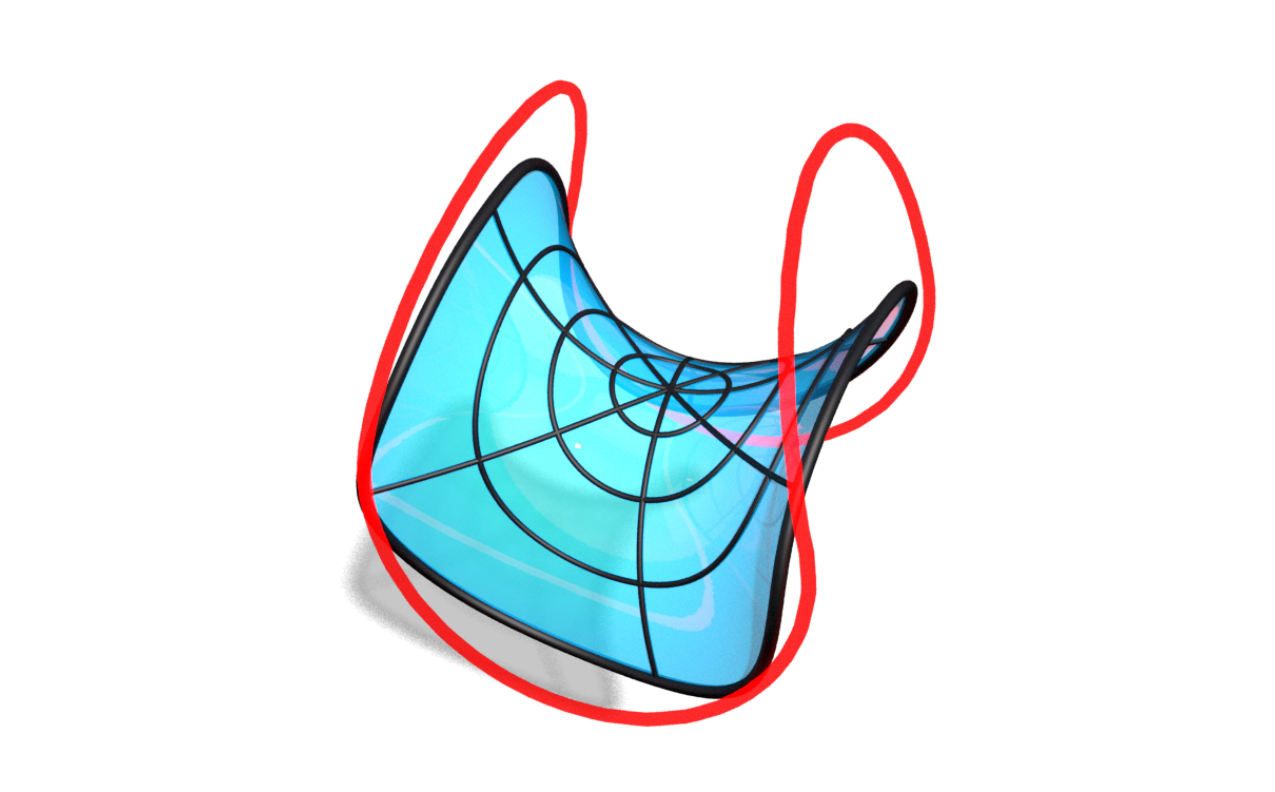} &
        \includegraphics[width=0.125\linewidth, trim=140 0 140 0, clip]{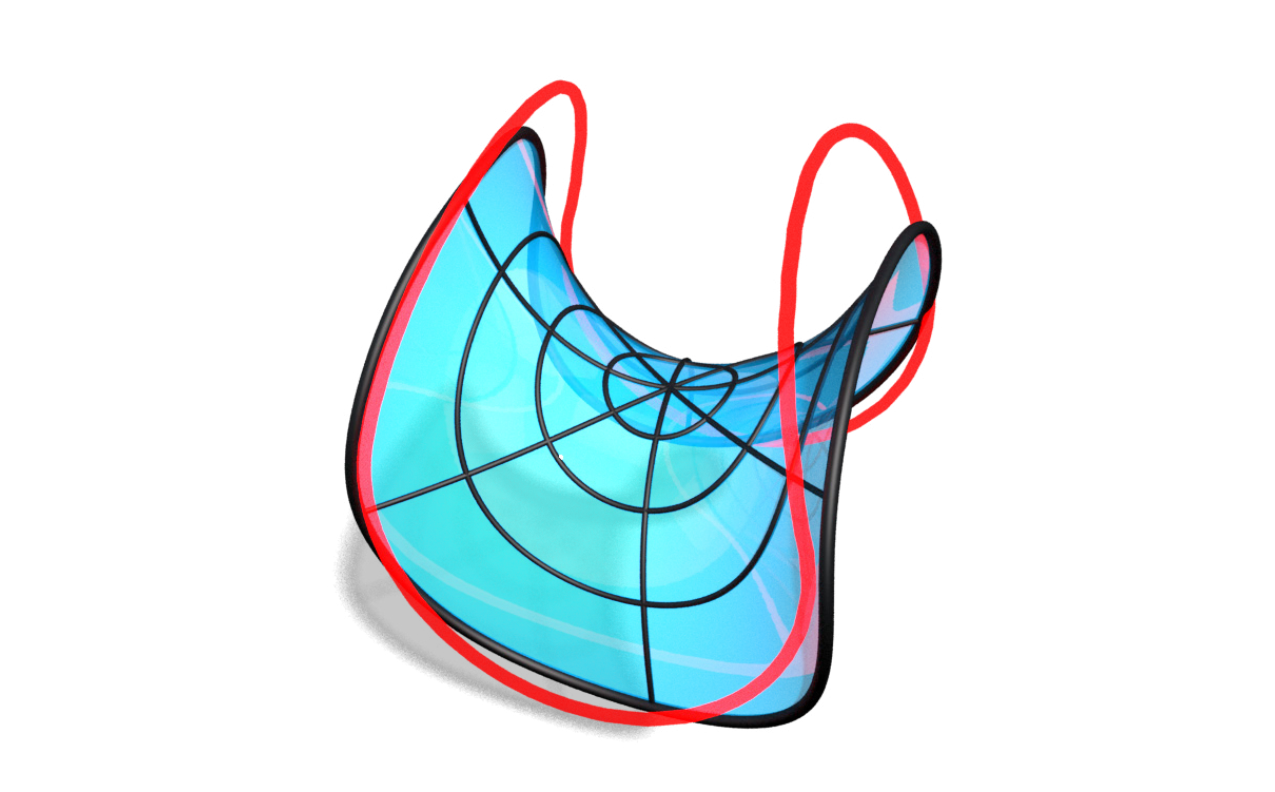} &
        \includegraphics[width=0.125\linewidth, trim=140 0 140 0, clip]{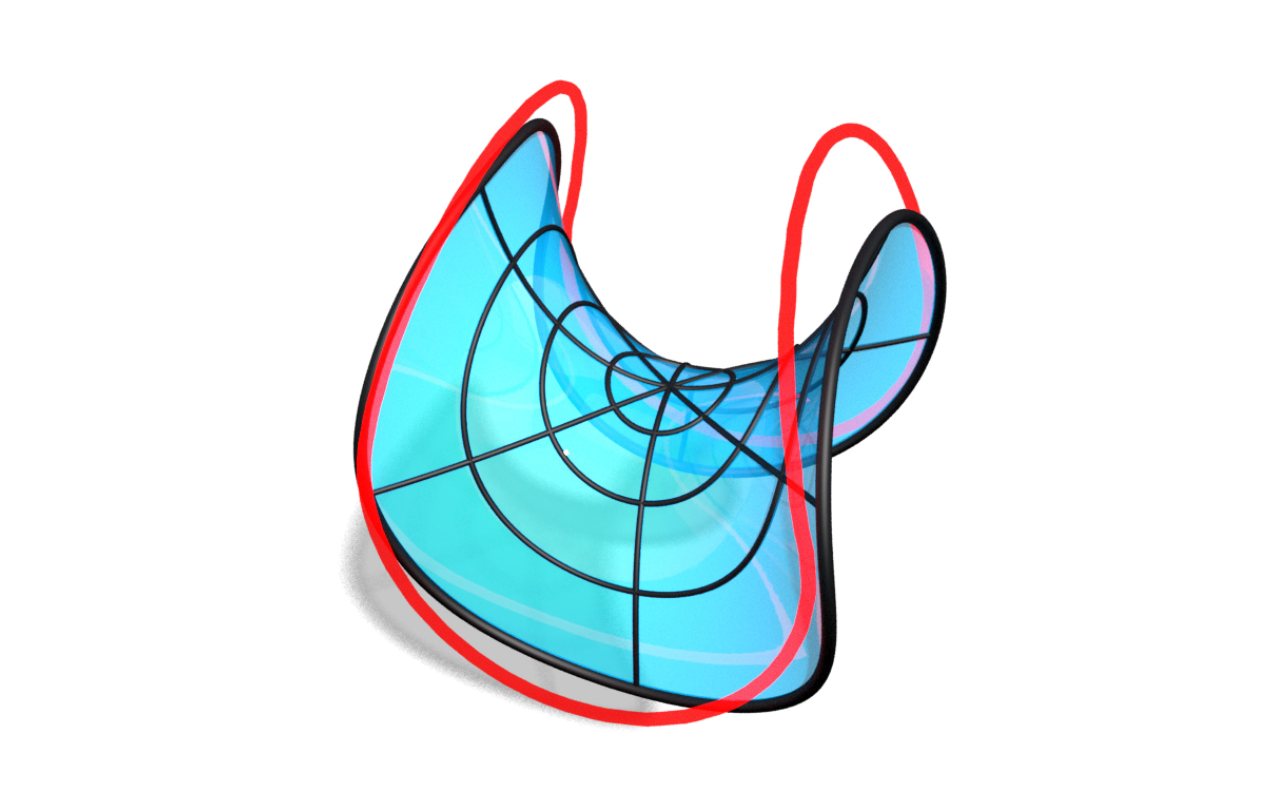}
    \end{tabular}
    }

    \caption{
    Optimization history with and without tangent fidelity to the boundary curve. 
    The case $\lambda=0$ disables the \(\cL_{\rm tangent}\) term, whereas $\lambda=0.5$ incorporates it. 
    Both experiments use the same initialization.
    }
    \label{fig:vel_comparison}
\end{figure}

\begin{figure}
    \centering
    \begin{subfigure}{\linewidth}
        \centering
        \includegraphics[width=0.8\linewidth]{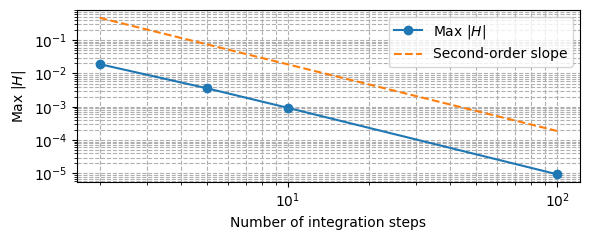}
        \caption{Surface of \Cref{fig:comparison}~(a)}
    \end{subfigure}
    \vspace{0.5em}
    \begin{subfigure}{\linewidth}
        \centering
        \includegraphics[width=0.8\linewidth]{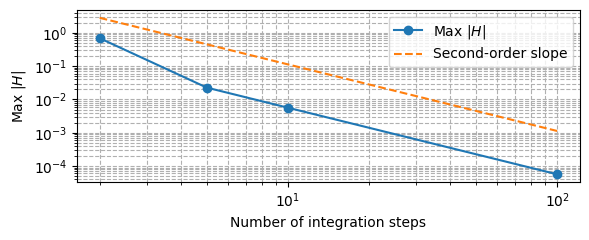}
        \caption{Surface of \Cref{fig:comparison2}~(a)}
    \end{subfigure}
    \vspace{0.5em}
    \begin{subfigure}{\linewidth}
        \centering
        \includegraphics[width=0.8\linewidth]{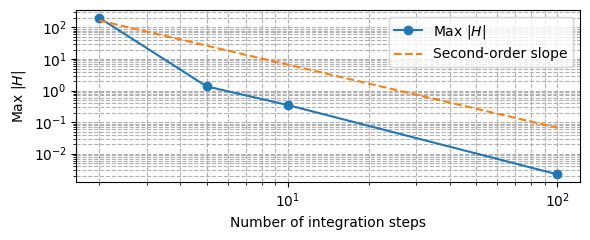}
        \caption{Surface of \Cref{fig:comparison_with_prescribed_surface}~(left)}
    \end{subfigure}
    \vspace{0.5em}
    \begin{subfigure}{\linewidth}
        \centering
        \includegraphics[width=0.8\linewidth]{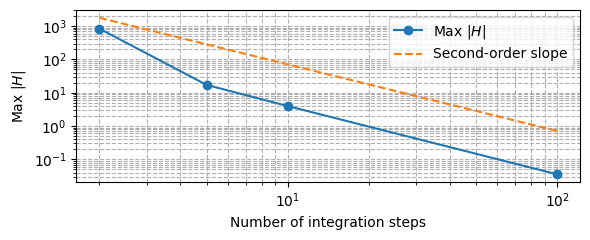}
        \caption{Surface of \Cref{fig:comparison_with_prescribed_surface}~(right)} 
    \end{subfigure}
    \caption{Log--log plots of the maximum absolute mean curvature versus the number of trapezoidal integration steps. The reference line indicates second-order convergence. Mean curvature is evaluated at points obtained by sampling a uniform $200\times200$ Cartesian grid over $[-1,1]^2$ and retaining only the points inside the unit disk.}
    \label{fig:Meancurvature_vs_intsteps}
\end{figure}

\clearpage

\appendix
    
\section{Conformal Condition for Harmonic Surfaces.}
\label{appendix:facts_about_harmonic_map}
Here we show that a harmonic surface \(\br = \Re\int_0^z\bphi(\zeta)\, d\zeta+\bc\) with a holomorphic \(\bphi\colon\DD\to\CC^3\) discussed in \eqref{eq:HarmonicSurfaceFromNullVelocity} is conformal if and only if \(\bphi\) is null (\ie\@ \(\bphi\) satisfies \eqref{eq:NullCondition}). 
Observe that $\phi_k=\frac{\partial{r_k}}{\partial x}-\ii\frac{\partial{r_k}}{\partial y}$. Thus
\begin{align*}
    \textstyle
    \sum_{k=1}^{3} {\phi}_k^{2} &=  
    \textstyle \sum_{k=1}^{3}{(\frac{\partial{r}_k}{\partial x})}^2
    - \sum_{k=1}^{3}{(\frac{\partial{r}_k}{\partial y})}^2-2 \ii \sum_{k=1}^{3}{\frac{\partial{r}_k}{\partial x}\frac{\partial{r}_k}{\partial y}}\\
    &= 
    \textstyle 
    |\frac{\partial{\br}}{\partial x}|^2-|\frac{\partial{\br}}{\partial y}|^2 -2\ii \langle\frac{\partial{\br}}{\partial x},\frac{\partial{\br}}{\partial y}\rangle.
\end{align*}
Hence \(\sum_{k=1}^3\phi_k^2 = 0\) if and only if \(|{\partial\br\over\partial x}|^2 = |{\partial\br\over\partial y}|^2\) and \(\langle {\partial\br\over\partial x},{\partial\br\over\partial y}\rangle = 0\), that is, \(\br\) is conformal.\qed

\section{Proof of \Cref{thm:RemovableBranch}}
\label{app:ProofOfRemovableBranch}
For a zero \(z_0\) of \(\bphi\) with order \(m\), we write \(\bphi(z) = (z-z_0)^m\bpsi(z)\) for some holomorphic \(\bpsi\) with \(\bpsi(z_0)\neq\bzero\).  
Let us study the type of branch point of the surface \(\br\) at \(z_0\) by the \({\frak q}\) value of the framed curve around \(z_0\).  Take the \(\epsilon\)-circle \(\gamma(\theta) = z_0 + \epsilon e^{\ii\theta}\), \(\theta\in[0,2\pi]\), for a small \(\epsilon>0\), and construct the basis vectors for the frame \((\bT(\theta),\bN(\theta),\bB(\theta))\in\SO(3)\) for each \(\theta\in[0,2\pi]\) by
\begin{align*}
\textstyle
    \bT = \textsc{Normalize}(d\br({d\gamma\over d\theta})),\;
    \bB = \textsc{Normalize}( d\br(-\ii{d\gamma\over d\theta})),\;\bN = \bB\times\bT.
\end{align*}
Here, the expression for \(\bB\) uses the fact that \(\br\) is conformal.  Note that 
\(d\br = \Re(d\brho) = \Re({\partial\brho\over\partial z}\, dz+{\partial\brho\over\partial\conj z}\, d\conj z) = \Re({\partial\brho\over\partial z}\, dz)
= \Re(\bphi\, dz)\) using the fact that \(\brho\) is holomorphic (\({\partial\over\partial \conj z}\brho = 0\)).  Hence, with \(dz({d\gamma\over d\theta}) = \epsilon\ii e^{\ii\theta}\), we have \(\bT(\theta) \propto \Re(\bphi(z_0 + \epsilon e^{\ii\theta})\epsilon \ii e^{\ii\theta}) \propto \Re( \ii e^{\ii (m+1)\theta}{\bpsi(\gamma(\theta))})\).
Similarly, \(\bB(\theta)\propto \Re(e^{\ii(m+1)\theta}\bpsi(\gamma(\theta)))\).
Since \(\bpsi\) is analytic and nonvanishing at \(z_0\), \(\bpsi(\gamma(\theta)) = \bpsi(z_0) + O(\epsilon)\).  Call \(\bpsi_0 = \bpsi(z_0)\in\CC^3\setminus\{\bzero\}\).
For simplicity let us define the (scaled) frame \(\tilde\bT(\theta)\coloneqq \Re(\ii e^{\ii (m+1)\theta}\bpsi_0)\) and \(\tilde\bB(\theta)\coloneqq\Re(e^{\ii(m+1)\theta}\bpsi_0)\).  Their normalizations represent the true frame vectors \(\bT,\bB\) up to \(O(\epsilon)\). In particular, for sufficiently small \(\epsilon\), the liftability of the frames (that determine the value of \({\frak q}(\gamma)\)) defined by \(\bT,\bB\) and by \(\tilde\bT,\tilde\bB\) are the same.

Now, observe that \(\tilde\bB(\theta),\tilde\bT(\theta)\) are in the span of  \(\tilde\bT(0) = -\Im(\bpsi_0)\), \(\tilde\bB(0) = \Re(\bpsi_0)\) for all \(\theta\).  Thus \(\tilde\bN\coloneqq {\tilde\bB\times\tilde\bT\over |\tilde\bB\times\tilde\bT|}\) is independent of \(\theta\), and 
\begin{align}
    \begin{bmatrix}
        \tilde\bB(\theta) & \tilde\bT(\theta)
    \end{bmatrix}
    =
    e^{\theta(m+1)[\tilde\bN\times{}]}\begin{bmatrix}
        \tilde\bB(0) & \tilde\bT(0)
    \end{bmatrix}
\end{align}
The lift of the rotation matrix \(e^{\theta(m+1)[\tilde\bN\times{}]}\) to quaternion is given by \(e^{\theta{m+1\over 2}\tilde\bN}\in\HH\), where the rotation axis \(\tilde\bN\) is regarded as an imaginary quaternion.  The liftability of the frame amounts to whether the quaternion \(e^{\theta{m+1\over 2}\tilde\bN}\) has consistent value at \(\theta = 0\) and \(2\pi\).  The frame is hence not liftable if and only if \(m\) is even.  Therefore, the value of \({\frak q}(\gamma)\) is zero if and only if \(m\) is even.  That is, the branch point at \(z_0\) is removable if and only if \(m\) is even.\qed

\section{Proof of \Cref{lem:even_order}}\label{app:ProofOfEvenOrderLemma}
\noindent $(\Rightarrow)$ Obvious.

\noindent $(\Leftarrow)$
\subsubsection*{\textbf{Claim 1.}} $\forall z_0\in \Omega$, $\exists u_U \in \mathcal{O}(U)$, where $z_0\in U \subset\Omega$, such that $u_U^2 = h$ on $U$. 
        \begin{proof}
        If $h=0$ at $z_0\in\Omega$, then locally we have $h(z) = (z-z_0)^{2k}g(z)$ on $\tilde{U}\ni z_0$ for some $g \in \mathcal{O}(\tilde{U})$ ($g(z_0)\neq0$). Pick a branch of logarithm and possibly shrink $\tilde{U}$ to $U\subset\tilde{U}$ such that $g(U)$ does not intersect the branch cut. Let $u_U(z):=(z-z_0)^{k}e^{\frac{1}{2}\log{g(z)}}$, then $u_U\in\mathcal{O}(U)$ and $u_U^2(z) = (z-z_0)^{2k}g(z) = h(z)$.
        \end{proof}

    \subsubsection*{\textbf{Claim 2.}} Suppose we have $u_U$ and $u_V$ on $U$ and $V$ respectively (with $U\cap V\neq\emptyset$ and $U$,$V$ are open disks) such that they satisfy the conditions that $u_U^2 = h$ on $U$ and that $u_V^2 = h$ on $V$, then either $u_U = u_V$ or $u_U = -u_V$ on $U\cap V$.
        \begin{proof}
        Let $r:=\frac{u_U}{u_V}$ on $U\cap V$. Then $r\in \mathcal{O}(U\cap V)$ and $r^2 = \frac{u_U^2}{u_V^2} = \frac{h}{h} =1$. Since $U\cap V$ is connected, we have $r = 1$ or $r = -1$. Therefore $u_U = u_V$ or $u_U = -u_V$.
        \end{proof}

    \subsubsection*{\textbf{Claim 3.}} For each $u_U$ defined above, it is defined on the arbitrary disk contained in $\Omega$ centered at $z_0$. 
        \begin{proof}
            Suppose not, then
            \[
            R:=\sup\{r\in\mathbb{R}^+|u_U\text{ can be extended to } B_r(z_0)\subset\Omega\}
            \]
            where $B_R(z_0)\subset \Omega$. Notice that on $B_R(z_0)$ we have $u_{B_R(z_0)}$ which satisfies $u_{B_R(z_0)}^2 = h$ and is indeed an holomorphic extension of $u_U$. 
            
            Then $\forall z\in C_R(z_0)$, $z\in\Omega$, and let $(u_{V_Z},V_z)$ be defined as above ($V_z$ being a disk) and we change the sign of $u_{V_Z}$ if necessary such that $u_{V_Z} = u_{B_R(z_0)}$. Then we define $\tilde{V}_z :=V_z\cup B_R$ and $u_{\tilde{V}_z}$ be the extension of $ u_{B_R(z_0)}$ to this set. Let $\tilde{V}:=\displaystyle \bigcup_{z \in U} \tilde{V}_z $, and $\forall \zeta\in \tilde{V}$ we have $\zeta\in \tilde{V}_z$ for some $z\in C_R(z_0)$, define $\tilde{u}(\zeta) := u_{\tilde{V}_z}(\zeta)$. This is well defined since for any other $\tilde{V}_{\hat{z}} \ni  \zeta$,  $u_{\tilde{V}_z} = u_{\tilde{V}_{\hat{z}}}$ on their overlap. So we have extended $u_U$ to a larger open set containing $B_R(z_0)$. By (Marsden 1.4.27) $B_R(z_0)$ is contained in a larger open disk in $\tilde{V}$ where $u_U$ is extended to, this contradicts the hypothesis.
        \end{proof}

    \subsubsection*{\textbf{Claim 4.}} $\exists u\in\mathcal{O}(\Omega)$ such that $u^2 = h$
        \begin{proof}

        Pick $z_0\in \Omega$. $\forall z\in \Omega$, for any continuous curve $\gamma:[0,1]\rightarrow \Omega$ joining $z_0$ and $z$, $\gamma([0,1])$ is compact and $\mathbb{C}\setminus\Omega$ is closed, by Distance Lemma (Marsden 1.4.21) $\exists \rho>0$ such that for each $t \in [0,1]$, $B_\rho(\gamma(t))\subset \Omega$. Also, since $\gamma$ is continuous, it is uniformly continuous on $[0,1]$, and hence $\exists\delta>0$ such that $|s-t|<\delta$ implies $|\gamma(s)-\gamma(t)|<\rho$. So we can choose a partition $\{0=t_0<...<t_n=1\}$ such that $|t_i - t_{i+1}|=\frac{\delta}{2}$, and for each $\gamma(t_i)$ we extend (or restrict) the associated function $u_{t_i}$ to $U_{t_i}:=B_{\rho}(\gamma(t_i))$ and name it $f_{t_i}$. This is a finite chain of balls, so we can switch the sign of $f_{t_{i+1}}$ starting from $i = 0$ such that $f_{t_{i}}$ and $f_{t_{i+1}}$ agree on $U_{t_i}\cap U_{t_{i+1}}$.

        We create a collection of pairs $(f_t,U_t),t\in[0,1]$ in the following way: 
        \begin{enumerate}
            \item for $t\in \{0=t_0<...<t_n=1\}$, we assign $(f_{t_i},U_{t_i})$. 
            \item $\forall t\in[0,1]\setminus\{0=t_0<...<t_n=1\}$, $t\in (t_j,t_{j+1})$ for some $j\in\{0,...,n-1\}$, we pick $U_t$ to be the open ball centered at $\gamma(t)$ such that it is contained in $U_{t_j}$.
        \end{enumerate}

        So $\forall t\in[0,1]$:
        \begin{enumerate}
            \item If $t\in \{0=t_0<...<t_n=1\}$, it is obvious that $\exists \epsilon >0$ such that $\forall t^\prime \in [0,1]$ such that $|t-t^\prime|<\epsilon$ implies $\gamma(t^\prime)\in U_t$ and the functions $f_{t}$ and $f_{t^\prime}$ coincide on the intersection $U_{t}\cap U_{t'}$.
            \item If $t\in[0,1]\setminus\{0=t_0<...<t_n=1\}$, it is also obvious that the same condition holds.
        \end{enumerate}

        Therefore, we have an analytic continuation along any curve. Combining the fact that $\Omega$ is simply connected, by \textbf{Monodromy Theorem} we have $u \in\mathcal{O}(\Omega)$ where $u(z):=u_\text{Analytic Continuation}(z)$ and that $u^2 = h$.
        \end{proof}

\section{Proof of \Cref{0-1 to 0-1 homeo_NN}}
\label{app:proofofhomeo}

\begin{lemma} \label{cts_approx_pcconst}
    $\forall\rho:[0,1]\to\mathbb{R}$ piecewise constant function which takes finitely many values, then $\exists\{g_i\}_{i\in I}$ which is a sequence of continuous functions $g_i:[0,1]\to\mathbb{R}$ such that $\{g_i\}_{i\in I} \rightarrow \rho$ strongly in $L^1([0,1])$.
\end{lemma}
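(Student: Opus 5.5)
Since \(\rho\) is piecewise constant with finitely many values, there is a partition \(0=a_0<a_1<\dots<a_n=1\) such that \(\rho\equiv c_j\) on each open interval \((a_{j-1},a_j)\). The values of \(\rho\) at the finitely many breakpoints form a null set, so they do not affect the \(L^1\) norm. The plan is to replace each jump by a steep linear ramp. This gives continuous piecewise-linear functions \(g_i\), and the \(L^1\) error is supported on small neighborhoods of the breakpoints.

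\textbf{Construction.} Let \(\delta_0=\min_j (a_j-a_{j-1})/2>0\) and \(M=\max_j |c_j|\). For \(i\in\NN\) with \(1/i<\delta_0\), define \(g_i\) as follows. On \([a_{j-1}+1/i,\;a_j-1/i]\), set \(g_i=c_j\). On each transition interval \([a_j-1/i,\;a_j+1/i]\) with \(1\le j\le n-1\), let \(g_i\) be the linear interpolation between \(c_j\) and \(c_{j+1}\). Near the endpoints, on \([0,1/i]\) and \([1-1/i,1]\), set \(g_i\) equal to the constant \(c_1\) and \(c_n\) respectively. The condition \(1/i<\delta_0\) makes these intervals disjoint and correctly ordered, so \(g_i\) is well defined. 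The definitions agree at the junctions, so \(g_i\) is continuous. Moreover \(|g_i|\le M\) everywhere.

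\textbf{Estimate.} By construction, \(g_i=\rho\) almost everywhere outside \(\bigcup_{j=1}^{n-1}[a_j-1/i,a_j+1/i]\). On each such interval, \(|g_i-\rho|\le 2M\). Therefore
\[
\|g_i-\rho\|_{L^1}\le (n-1)\cdot\frac{2}{i}\cdot 2M=\frac{4M(n-1)}{i}\to 0 .
\]
This gives strong \(L^1\) convergence.

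The only delicate step is the bookkeeping: keeping the ramps disjoint, and handling the endpoints \(0,1\) and the breakpoint values. Both are dealt with by taking \(i\) large enough and by noting that the breakpoints have measure zero.
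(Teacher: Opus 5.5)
Your proof is correct and follows essentially the same route as the paper. Both replace each jump of $\rho$ by a linear ramp on a shrinking interval around the breakpoint, and both bound the $L^1$ error by the ramp width times the jump size: the paper gets $2\delta\sum_k |J_k|$, and your uniform bound $2M$ per jump is cruder but equally valid.
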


\begin{proof}
    It suffices to prove for the case where $\rho$ is non-constant. Suppose $\rho(t) = s_k, t\in[t_k, t_{k+1}], k=0,...,N-1$. Let $J_k:= \lim_{t \to {t_k}^+} \rho(t) - \lim_{t \to {t_k}^-} \rho(t), k = 1, ..., N-1$.

    Pick $\delta>0$ such that $I_k:=[t_k - \delta, t_k+\delta],k = 1, ..., N-1$ are disjoint and contained in $[0,1]$, and that $\delta<\frac{\epsilon}{2\sum_{k=1}^{N-1}|J_k|}$.

    Define $g(s):=\begin{cases}
\text{$\rho(s)$}, & \text{$s\notin \bigcup_{k=1}^{N-1} I_k$}, \\
\text{$s_{k-1}+\frac{s-(t_k-\delta_k)}{2\delta_k}(s_k-s_{k-1})$}, & \text{$s\in I_k$ }.
\end{cases}$

Then 
\begin{equation}
    \begin{aligned}
        \|g-\rho\|_{L^1} &= \int_{[0,1]}|g-\rho|\,d\mu \\
        &= \int_0^1 |g(s)-\rho(s)|\,ds \\
        &= \sum_{k=1}^{N-1} \int_{t_k-\delta}^{t+\delta} |g(s)-\rho(s)|\,ds \\
        &\leq 2\delta\sum_{k=1}^{N-1}|J_k|\leq \epsilon
    \end{aligned}
\end{equation}

Define $\{g_i\}_{i\in I}$ such that $\|g_i-\rho\|_{L^1}<\frac{1}{i}$ using the construction above. Hence we have $\{g_i\}_{i\in I} \rightarrow \rho$ strongly in $L^1([0,1])$.
\end{proof}

\begin{corollary} \label{mlp_approx_pcconst}
    If $\rho$ is strictly positive. Then $\{g_i\}_{i\in I}$ can be replaced by $\{f_i\}_{i\in I}$ which is a sequence of MLPs with ReLU for hidden layers and exponential activation for the output layer.
\end{corollary}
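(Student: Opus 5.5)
The idea is to pass through the logarithm, so that positivity comes for free from the exponential output activation. Since \(\rho\) is strictly positive and piecewise constant with finitely many values, \(\log\rho\) is again a piecewise constant function with finitely many values \(\log s_k\). Applying \Cref{cts_approx_pcconst} to \(\log\rho\), I would take the explicit continuous piecewise-linear interpolants \(g_i\) from its proof, with \(g_i\to\log\rho\) in \(L^1([0,1])\). These \(g_i\) are uniformly bounded: each linear ramp lies between consecutive values \(\log s_{k-1}\) and \(\log s_k\), so \(m\le g_i\le M\), where \(m=\min_k\log s_k\) and \(M=\max_k\log s_k\).

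Next I would realise each \(g_i\) (or a close approximation) by a ReLU network. A continuous piecewise-linear function on \([0,1]\) with finitely many breakpoints is exactly representable by a one-hidden-layer ReLU MLP with linear output:
\[
g_i(s)=a_0+a_1 s+\sum_j c_j\,\mathrm{ReLU}(s-b_j).
\]
Alternatively, the universal approximation theorem gives a ReLU MLP \(h_i\) with \(\|h_i-g_i\|_\infty<1/i\). In either case \(h_i\) stays in \([m-1,M+1]\). Set \(f_i\coloneqq\exp\circ h_i\); this is an MLP with ReLU hidden layers and exponential output activation, and it is strictly positive.

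It remains to show \(f_i\to\rho\) in \(L^1\). On the bounded interval \([m-1,M+1]\) the exponential is Lipschitz with constant \(e^{M+1}\), so
\[
\|f_i-\rho\|_{L^1}\le e^{M+1}\|h_i-\log\rho\|_{L^1}\le e^{M+1}\bigl(\|h_i-g_i\|_{\infty}+\|g_i-\log\rho\|_{L^1}\bigr)\to 0 .
\]

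The only delicate point is this uniform bound, which is what lets the Lipschitz estimate for \(\exp\) pass \(L^1\) convergence through the exponential. It is why I would use the explicit interpolants rather than arbitrary continuous approximants; with those the bound is immediate. The exact ReLU representation of piecewise-linear functions is standard, and I would cite it rather than re-derive it.
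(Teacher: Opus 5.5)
Your proof is correct and is essentially the paper's argument: realize the network as $\exp$ of a ReLU approximant of a logarithm, then control the error using the Lipschitz constant of $\exp$ on a bounded interval together with the triangle inequality in $L^1$. The only difference is the order of operations. The paper applies the lemma to $\rho$ itself, notes that the interpolants $g_i$ are positive, and approximates $\log g_i$ uniformly, so the bounded range is automatic. You instead apply the lemma to $\log\rho$, and therefore need the uniform bound $m\le g_i\le M$ to carry $L^1$ convergence through $\exp$, which you correctly supply.
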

\begin{proof}
    It is obvious from the construction of $\{g_i\}_{i\in I}$ in Lemma.~\ref{cts_approx_pcconst} that $g_i$ are also strictly positive.
    
    For each $g_i$, given $\epsilon>0$, by the Universal Approximation Theorem, $\exists \tilde{f}:\mathbb{R}\to\mathbb{R}$ MLP with ReLU activations in hidden layers and identity activation in the output layer such that $\left\| \tilde{f} - \log g_i\right\|_\infty < \epsilon$ on $[0,1]$.

    Let $f:=e^{\tilde{f}}$, then since $e^x$ is Lipschitz on $[(\inf_{x \in [0,1]} \log g_i(x)) -\epsilon, (\sup_{x \in [0,1]} \log g_i(x)) +\epsilon ]$ and we can choose $C=e^{\sup_{x \in [0,1]} \log g_i(x)) +1  }$ such that $C$ is invariant of $\epsilon$, we have $\forall x\in[0,1]$, $|f(x)-g_i(x)|=|e^{\tilde{f}(x)}-e^{\log g_i(x)}|=C\cdot|\tilde{f}(x)-\log g_i(x)|<C\cdot\epsilon$.

    Therefore $\forall \epsilon>0$ we can find $f$ such that $\|f-g_i\|_\infty<\epsilon$ and hence $\|f-g_i\|_{L^1}<\epsilon$ on $[0,1]$. Let $\epsilon=\frac{1}{i}$ and $f_i:=f$, it follows that $\|f_i-\rho\|_{L^1}\leq\|f_i-g_i\|_{L^1}+\|g_i-\rho\|_{L^1}<\frac{1}{i}+\frac{1}{i}=\frac{2}{i}$. Thus, we have $\{f_i\}_{i\in I} \rightarrow \rho$ strongly in $L^1([0,1])$.

\end{proof}

\begin{proof}
    Given $ \epsilon >0$. Since $\Phi$ is continuous, it is uniformly continuous on $[0,1]$. Therefore, we can find a $\delta >0$ such that $|\Phi(x)-\Phi(y)|<\epsilon$, $\forall |x-y|<\delta$. Pick a uniform partition of $[0,1]$: $0=t_{0}<t_{1}<\dots<t_N=1$ such that $|t_k-t_{k-1}|<\delta$. Let \[\Phi_N(t) := 
    \frac{t_{k+1} - t}{t_{k+1} - t_k} \Phi(t_k)
    + \frac{t - t_k}{t_{k+1} - t_k}  \Phi(t_{k+1}),\; t \in [t_k, t_{k+1}], \; k = 0,\dots,N-1.\] be the function of linear interpolation of values of $\Phi$ at $t_0, t_1,\dots,t_N$. Now observe that $\forall t\in[0,1]$, $t$ lies in $[t_k,t_{k+1}]$ for some $k$, we have 

    \begin{equation}
    \begin{aligned}
    |\Phi(t) - \Phi_N(t)|
    &= \left| \Phi(t) 
    - \frac{t_{k+1} - t}{t_{k+1} - t_k} \, \Phi(t_k)
    - \frac{t - t_k}{t_{k+1} - t_k} \, \Phi(t_{k+1}) \right| \\[6pt]
    &\leq 
    \left| \frac{t_{k+1} - t}{t_{k+1} - t_k} \, \Phi(t) 
    - \frac{t_{k+1} - t}{t_{k+1} - t_k} \, \Phi(t_k) \right| \\[6pt]
    &\quad +
    \left| \frac{t - t_k}{t_{k+1} - t_k} \, \Phi(t) 
    - \frac{t - t_k}{t_{k+1} - t_k} \, \Phi(t_{k+1}) \right| \\[6pt]
    &< \frac{t_{k+1} - t}{t_{k+1} - t_k} \, \epsilon
    + \frac{t - t_k}{t_{k+1} - t_k} \, \epsilon \\[6pt]
    &= \epsilon.
    \end{aligned}
    \end{equation}
    
    So $\|\Phi - \Phi_N\|_{\infty}<\epsilon$.
    
    The density function of $\Phi_N$ is obvious: Let $s_k:=\frac{\Phi(t_{k+1})-\Phi(t_{k})}{t_{k+1}-t_{k}}$ and $\rho_N(t):=s_k, t\in[t_k, t_{k+1}],k=0,...,N-1$. Observe that $\forall t\in [0,1], \Phi_N(t)=\int_0^t \rho_N(s)\,ds$.
    
    Let $m:=\min\limits_k s_k$, observe that $0<m\leq1$. Choose $\eta >0$ such that $0<\eta<\frac{m}{4}\epsilon$.
    
    Then by Corollary.~\ref{mlp_approx_pcconst}, we can find $\rho:[0,1]\to \mathbb{R}_{>0}$ which is an MLP with ReLU activations for hidden layers and exponential activation for the output layer such that $\|\rho-\rho_N\|_{L^1}<\eta$.

    Let $A(t):=\int_{0}^{t}\rho(s)ds$. Then we have $|A(t)-\Phi_N(t)|=|\int_0^t\rho(s)-\rho_N(s)\,ds|\leq\int_0^t|\rho(s)-\rho_N(s)|\,ds\leq\int_0^1|\rho(s)-\rho_N(s)|\,ds<\eta$. 

    Define $\Psi(t):=\frac{\int_0^t\rho(s)\,ds}{\int_0^1\rho(s)\,ds}$. Then $|\Psi(t)-\Phi_N(t)|=|\frac{\int_0^t\rho(s)\,ds}{A(1)}-\Phi_N(t)|=|\frac{A(t)-A(1)\Phi_N(t)}{A(1)}|=|\frac{A(t)-\Phi_N(t)-(A(1)-1)\Phi_N(t)}{A(1)}|\leq|\frac{A(t)-\Phi_N(t)}{A(1)}|+|\frac{A(1)-\Phi_N(1)}{A(1)}|\leq\frac{2\eta}{A(1)}\leq\frac{2\cdot\frac{m}{4}\cdot\epsilon}{\frac{m}{2}}$. (The last inequality holds because $|A(1)-1|<\eta<\frac{m}{2}$ and thus $A(1)>\frac{m}{2}=\epsilon$.)

    Finally we have $\|\Psi-\Phi\|_\infty\leq\|\Psi-\Phi_N\|_\infty+\|\Phi_N-\Phi\|_\infty\leq\epsilon+\epsilon=2\epsilon$.

\end{proof}

\end{document}